\theoremstyle{thmstyleone}%
\newtheorem{theorem}{Theorem}
\newtheorem{proposition}[theorem]{Proposition}%
\theoremstyle{thmstyletwo}%
\theoremstyle{thmstylethree}%
\begin{document}

\title[ The shape of the relative frailty variance ...]{\centering The shape of the relative frailty variance induced by discrete random effect distributions in univariate and multivariate survival models \\
}


\author*[1]{\fnm{Maximilian} \sur{Bardo}}\email{maximilian.bardo@med.uni-goettingen.de}

\author[1,2]{\fnm{Steffen} \sur{Unkel}}\email{steffen.unkel@med.uni-goettingen.de}


\affil[1]{\orgdiv{Department of Medical Statistics}, \orgname{University Medical Center G\"{o}ttingen}, \orgaddress{
\country{Germany}}} 

\affil[2]{\orgdiv{Faculty V: School of Life Sciences}, \orgname{University of Siegen}, \orgaddress{
\country{Germany}}}



\abstract{In statistical models for the analysis of time-to-event data, individual heterogeneity is usually accounted for by means of one or more random effects, also known as frailties. In the vast majority of the literature, the random effect is assumed to follow a continuous probability distribution.
However, in some areas of application, a discrete frailty distribution may be more appropriate.
We investigate and compare various existing families of discrete univariate and shared frailty models by taking as our focus the variance of the relative frailty distribution in survivors. The relative frailty variance ($\text{RFV}$) among survivors provides a readily interpretable measure of how the heterogeneity of a population, as represented by a frailty model, evolves over time. We explore the shape of the $\text{RFV}$ for the purpose of model selection and review available discrete random effect distributions in this context. 
We find non-monotone trajectories of the $\text{RFV}$ for discrete univariate and shared frailty models, which is a rare property. 
Furthermore, we proof that for discrete time-invariant univariate and shared frailty models with (without) an atom at zero, the limit of the $\text{RFV}$ approaches infinity (zero), {if the support of the discrete distribution can be arranged in ascending order}. 
Through the {one-to-one} relationship of the $\text{RFV}$ with the cross-ratio function in shared frailty models, which we generalize to the higher-variate case, our results also apply to {patterns of association within a cluster}.
Extensions and contrasts to discrete time-varying frailty models and contrasts to correlated discrete frailty models are discussed.}

\keywords{Association, Discrete distributions, Frailty, Heterogeneity, Relative frailty variance, Time-to-event models}



\maketitle

\section{Introduction}\label{intro}


In the modelling of time-to-event data, unobserved heterogeneity among observational units is commonly accounted for by random effects, 
also known as frailties \citep[see, e.g.,][]{Aalen.2008,Duchateau.2008,Hougaard.2000,Wienke.2010}. 
Frailty models provide a conceptually appealing way of quantifying the association between
clustered event times and of representing heterogeneity across observational units resulting from factors that
may be difficult or impossible to measure.
The vast majority of the literature treats the random effect as being continuous; the gamma, inverse Gaussian, and log-normal distributions are typical choices for the frailty (see, e.g., the books cited above and the references therein). 
However, in certain areas of application, some of which are mentioned further below, discrete frailty distributions may be more appropriate.

Discrete frailty distributions for univariate and multivariate data have been considered by \cite{Pickles.1994}, \cite{Begun.2000}, \cite{Caroni.2010}, Section 3.2 in \cite{Wienke.2010} and the references therein, \cite{Farrington.2012}, \cite{Ata.2013}, \cite{Bijwaard.2014}, \cite{Palloni.2017}, \cite{Souza.2017}, \cite{Cancho.2018,Cancho.2020}, \cite{Gasperoni.2020}, \cite{Mohseni.2020}, \cite{Cancho.2021}, \cite{Molina.2021}, and \cite{Cancho.2020b}. 
Among the utilized distributions in this setting are the geometric distribution \citep{Caroni.2010,Cancho.2021}, 
the negative binomial distribution, the Poisson distribution \citep{Caroni.2010},
the zero-inflated power series distribution which contains the aforementioned distributions as special cases \citep{Cancho.2018,Cancho.2020},
the zero-modified power series distribution which contains the zero-inflated power series distribution as a special case \citep{Molina.2021}, and the hyper-Poisson distribution \citep{Souza.2017,Mohseni.2020}. 
\cite{Ata.2013} introduced a discrete compound Poisson process in which the Poisson distribution determines the number of random variables that are either binomial, negative binomial, or Poisson, and in which these random variables add up to the frailty.
\cite{Farrington.2012} introduced the Addams family of discrete frailty distributions, which includes the scaled negative binomial, binomial, Poisson, and, as a continuous exception, the gamma distribution.
Furthermore, the $k$-point distribution, i.e. a distribution for which the finite support of the frailty is multinomial distributed with $k$ categories, has also been proposed; see, for example, Section 3.2 in \cite{Wienke.2010} and the references therein, \cite{Begun.2000}, and \cite{Pickles.1994}. 
For a discussion on univariate and multivariate survival data, recurrent events, competing risks data as well as shared and correlated frailties in the context of the $k$-point distribution, see \cite{Bijwaard.2014}.  
A non-parametric version of the $k$-point frailty distribution, where $k$ is also subject to estimation, was suggested by \cite{Gasperoni.2020}.
The $k$-point distribution has been suggested for scenarios in which the unobserved heterogeneity is due to genetic factors \citep[see][]{Pickles.1994,Begun.2000,Wienke.2010}.
Further applications are time to first readmission of heart failure patients to hospitals, where unobserved heterogeneity is assumed to be induced by the performance of health care providers \citep{Gasperoni.2020}.
Labor market applications are suggested by \cite{Bijwaard.2014}.
\cite{Palloni.2017} suggested a delayed binary frailty model, i.e. an excess hazard model where the frailty-weighted excess hazard steps in from some point in time after the start of follow-up.
The delayed frailty effect, which is governed by a continuous random variable, is supposed to represent early-life conditions that affect adult health and mortality.
A discrete frailty model with a correlated instead of a shared frailty has also been suggested by \cite{Cancho.2020b}, utilizing the Poisson and the gamma distributions as components of the latent structure. 

Apart from the $k$-point distribution, the aforementioned discrete distributions have support on the natural numbers including zero. 
If the frailty acts multiplicatively on the baseline hazard rate, the positive probability at zero leads to
observational units not being at risk of experiencing the event(s) of interest, i.e. they are cured \citep[p. 244]{Aalen.2008}.
Hence, a proposed area of application for discrete frailty models are data for which a cure rate is appropriate.
For applications of discrete cure rate models to time-to-death or time-to-recurrence in melanoma data, see \cite{Souza.2017},\cite{Cancho.2018,Cancho.2020}, and \cite{Molina.2021}.
For an application of those models to time-to-death from Hodgkin lymphoma, see \cite{Mohseni.2020}.

\cite{Caroni.2010} suggest {univariate} discrete frailty models without a cure rate by adding
an additional hazard rate on the frailty-weighted baseline hazard.
An application in reliability analysis can be found in \cite{Caroni.2010}. 
\cite{Caroni.2010} and \cite{Molina.2021} also suggested non-cure-rate models by truncating zero from the support of the discrete frailty distribution.

The shapes of association for clustered event times and patterns of heterogeneity across observational units in discrete frailty models have not been investigated thoroughly yet.
Only \cite{Farrington.2012} studied the heterogeneity induced by the Addams family of discrete frailty distributions, and \cite{Cancho.2020b} examined the association within a cluster induced by their proposed correlated frailty model.
Therefore, the present paper investigates the shapes of association and patterns of heterogeneity, respectively, of discrete time-invariant univariate and shared frailty models.
We examine some of the aforementioned models and then proceed to a general investigation of discrete time-invariant univariate and shared frailty models in this respect.
This will be done by studying the variance of the relative frailty distribution in survivors \citep{Hougaard.1984}, i.e. the relative frailty variance ($\text{RFV}$), which is a local measure of heterogeneity across observational units over time.
For bivariate shared frailty models, the $\text{RFV}$ has a one-to-one relationship with the cross-ratio function ($\text{CRF}$) \citep{Anderson.1992}, which is a local measure of association within a cluster over time.
We slightly adapt the definition of the $\text{CRF}$ to retain the relationship with the $\text{RFV}$ for cases with more than two time-to-event variables per observational unit.
Thus, in a shared frailty model, our results concerning the $\text{RFV}$ also apply to the (adapted) $\text{CRF}$.
The advantage of using the $\text{RFV}$ compared to the $\text{CRF}$ is that it can also be utilized in the univariate case, i.e. for scenarios with a single time-to-event variable per observational unit, as a measure of heterogeneity.
We show that for discrete time-invariant univariate and shared frailty models with (without) an atom at zero, the limit of the $\text{RFV}$ is infinity (zero).
We also extend this finding to piecewise constant discrete time-varying univariate and shared frailty models.
{As a contrast, we show that the models introduced by \cite{Caroni.2010}, which do not follow our definition of time-invariant univariate frailty models, do not share this property.}
In the multivariate context, we contrast our findings on discrete shared frailty models with the class of discrete correlated frailty models introduced by \cite{Cancho.2020b}.

The remainder of the present paper is organized as follows. Section \ref{Frailty} defines the time-invariant univariate and shared frailty model that we consider throughout this paper. 
Moreover, the concept of the $\text{RFV}$ and its relation to the $\text{CRF}$ is revisited.
Section \ref{Shapes} investigates the $\text{RFV}$ of selected discrete univariate and shared frailty models. 
A general investigation of the $\text{RFV}$ for discrete univariate and shared frailty models follows.
The section concludes with extensions and contrasts of the former results to time-varying univariate and shared, as well as correlated frailty models.
Concluding remarks are given in Section \ref{conc}.
Further technical details and proofs are given in the Appendices.


\section{Univariate and shared frailty models for the hazard rates} \label{Frailty}



Consider the time-to-event random vector (RVe) $\boldsymbol{T} =[T^{(1)},\dots,$ $ T^{(J)}]^T$, with $J \geq1 $, and $T^{(j)} \in \mathbb{R}_{>0} \ \forall j$. 
We will refer to a specific $T^{(j)}$ from $\boldsymbol{T}$ as a target variable.
Let $\boldsymbol{x}^{(j)}(t)$, with $t \in \mathbb{R}_{\geq 0}$, and $j=1,\dots, J$, denote the realizations of observable, possibly time-dependent covariate vectors which are allowed to vary across $j$, $j=1,\dots,J$. 
Given covariates, the RVe $\boldsymbol{T}$ follows a (multivariate) population time-to-event distribution $\pi_{\boldsymbol{T}}$ with distribution parameters $\boldsymbol{\theta}_{\boldsymbol{T}}$. 
Covariates might affect the time-to-event distribution $\pi_{\boldsymbol{T}}$ itself and/or the distribution parameters $\boldsymbol{\theta}_{\boldsymbol{T}}$.
For different observational units $i$ and $i'$ and conditional on covariates, $\boldsymbol{T}_i$ is assumed to be independent of $\boldsymbol{T}_{i'}$. 
In the multivariate context, i.e. for $J \geq 2$, $T^{(j)}$ and $T^{(j')}$, with $j \neq j'$, are assumed to be independent within an observational unit (cluster) given unobservable cluster-specific characteristic $z$ and covariates.
The unit-specific latent characteristic $z$ is assumed to be a realization of the random variable (RVa) $Z\geq 0$, which follows a distribution $\pi_Z$ with parameter vector $\boldsymbol{\theta}_Z$.
The distribution $\pi_Z$ and/or the distribution parameters $\boldsymbol{\theta}_Z$ might depend on (a subset of) unit-specific covariates at the start of follow-up,
i.e. at time $t=0$, which are gathered in the vector $\tilde{\boldsymbol{x}}$.
The draws from $Z$ are assumed to be independent across observational units given $\tilde{\boldsymbol{x}}$.
Note that this, for example, encompasses random slopes for time- and target-invariant covariates, i.e. $z =z_1\exp\{\ln\{\boldsymbol{z}^T_2\}\tilde{\boldsymbol{x}}\}$, where $[z_1, \boldsymbol{z}^T_2]$ is an unobservable, unit-specific realization of a non-negative RVe, which is independent and identically distributed across observational units.

The conditional target-specific hazard rate, which determines the conditional time-to-event distribution, is assumed to be of the form
\begin{align}
\lambda^{(j)}(t\mid Z=z) &= \lim_{\Delta \to 0}\frac{P(t \leq T^{(j)} < t + \Delta \mid  T^{(j)} \geq t, Z = z,\boldsymbol{x}^{(j)}(t))}{\Delta} \nonumber \\ 
&= z \lambda_x^{(j)}(\boldsymbol{x}^{(j)}(t),t), \label{model}
\end{align}
with covariate and target-specific hazard rate $\lambda_x^{(j)}(\boldsymbol{x}^{(j)}(t),t)> 0$, and $t \in\mathbb{R}_{> 0}$.
We call \eqref{model} a time-invariant frailty model, as $z$ does not vary with $t$.
If $J>1$, \eqref{model} is a time-invariant shared frailty model, i.e. $z$ does not differ across $j=1,\dots,J$.

The unit-specific latent characteristic $z$ in equation \eqref{model} is also called a random effect (RE) or frailty hereafter.
The frailty induces association in the time-to-event targets within a cluster and heterogeneity across clusters or observational units, respectively.
We will use the terms (shared) frailty model, time-invariant (shared) frailty model, or RE model synonymously throughout this paper and refer to equation \eqref{model} in this case. 
We will restrict attention to time-invariant (shared) frailty models unless otherwise stated.

In the following, we assume that all quantities are conditioned on potential covariates without explicitly mentioning that in the notation.
Hence, we will drop covariates from the notation of the hazard rates and set $\lambda_x^{(j)}(\boldsymbol{x}^{(j)}(t),t)$ equal to the baseline hazard rate $ \lambda_0^{(j)}(t)$.

In the case of multivariate time-to-event data, i.e. for $J\geq2$, the $\text{CRF}$  \citep{Clayton.1978} is a suitable measure of association within a cluster.
The $\text{CRF}$ was originally introduced as a measure of bivariate association, and can be expressed as follows:
\begin{align}
\text{CRF}^{(j,j')}(\boldsymbol{t}) = \frac{h^{(j)}(t_{j}\mid T^{(j')}=t_{j'})}{h^{(j)}(t_j\mid  T^{(j')}>t_{j'})},\label{biCRF}
\end{align}
where $\boldsymbol{t} = [t_{j},t_{j'}]^T$, and $h^{(j)}(t\mid A)$ denotes the population hazard rate with observable condition $A$, 
i.e. $ \lim_{\Delta \to 0}\frac{P(t\leq T^{(j)} < t+\Delta\mid T^{(j)}\geq t,A)}{\Delta}$.
In the case of a time-invariant shared frailty model, the $\text{CRF}$ is symmetric in its superscript, i.e. $\text{CRF}^{(j,j')}(\boldsymbol{t}) = \text{CRF}^{(j',j)}(\boldsymbol{t})$,
as the $\text{CRF}$ can be re-expressed as a function of $\Lambda_0^{(j,j')}(\boldsymbol{t}) = \Lambda_0^{(j)}(t_j) + \Lambda_0^{(j')}(t_{j'}) = \Lambda_0^{(j',j)}(\boldsymbol{t})$, 
with $\Lambda_0^{(j)}(t)  = \int_0^t \lambda^{(j)}_0(u)du$ \citep{Oakes.1989}.
However, $\text{CRF}^{(j,j')}(\boldsymbol{t}) \neq \text{CRF}^{(j,j'')}(\boldsymbol{t})$ for $\boldsymbol{t}=[t_j,t]^T$, if $j' \neq j''$ and $\Lambda_0^{(j')}(t) \neq \Lambda_0^{(j'')}(t)$ as the implicit argument of the $\text{CRF}$ changes from $\Lambda_0^{(j,j')}(\boldsymbol{t})$ to $\Lambda_0^{(j,j'')}(\boldsymbol{t})$.
The reason for the inequation of the former two $\text{CRF}$s is not that the pattern of association differs across the two pairs $(j,j')$ and $(j,j'')$, which is determined by the cluster-specific frailty $z$ for both tuples, but that the generic time as defined by the cumulative baseline hazards that determine the selection process of the frailty \citep{Farrington.2012}, proceeds differently with $t$.
Hence, we redefine the $\text{CRF}$ to avoid the former inequation of the $\text{CRF}$s in order to obtain a measure of association that is also suitable for the higher variate case.
Let
\begin{align*}
\text{CRF}^{(j,j')}(\boldsymbol{t}) = \frac{h^{(j)}(t_{j}\mid T^{(j')}=t_{j'}, T^{(l)} >t_l\ \forall\ {l \neq j,j'})}{h^{(j)}(t_{j}\mid  T^{(l)} > t_l\ \forall\ {l\neq j})},
\end{align*}
where the population hazard rate is the same as equation \eqref{biCRF}, except that the observable condition $A$ in ${h^{(j)}({t \mid A})}$ is expanded to an arbitrary number of time-to-event RVas and $\boldsymbol{t}=[t_{1},\dots,t_{J}]^T$ accordingly. 
If the $\text{CRF}$ is considered as a function of the sum of the cumulative baseline hazards, i.e. $\text{CRF}^{(j,j')}(\boldsymbol{t})=\text{CRF}^{(j,j')}(\Lambda_0(\boldsymbol{t}))$, where $\Lambda_0(\boldsymbol{t}) = \sum_{j=1}^J \Lambda_0^{(j)}(t_j)$,
then $\text{CRF}^{(j,j')}(\boldsymbol{t}) = \text{CRF}^{(j,j'')}(\boldsymbol{t})$. 
This holds even if $j' \neq j''$ and $\Lambda^{(j')}(t) \neq \Lambda^{(j'')}(t)$ for some or all $t$, as the argument $\Lambda_0(\boldsymbol{t})$ is unaffected by the choice of the tuples $(j,j')$ and $(j,j'')$.
As the $\text{CRF}$ does not depend on the choice of $j$ and $j'$, we drop the superscripts in the following.
This adaption makes the $\text{CRF}$ a multivariate measure of association within a cluster on the generic time scale $\Lambda \in \{\Lambda_0(\boldsymbol{t}): \boldsymbol{t} \in \mathbb{R}_{\geq 0} \}$.
Note that $\Lambda_0^{(j)}(0)$ is defined to be zero for all $j$.

This definition of the $\text{CRF}$ retains $\text{CRF}(\boldsymbol{t})=1+\text{RFV}(\boldsymbol{t})$, with relative frailty variance $\text{RFV}(\boldsymbol{t}) = \frac{\operatorname{Var}(Z\mid \boldsymbol{T}>\boldsymbol{t})}{\operatorname{E}(Z\mid \boldsymbol{T}>\boldsymbol{t})^2}$, known from the bivariate case \citep{Anderson.1992} for situations 
with more than two time-to-event RVas per cluster.
The derivation is basically identical to the bivariate case and can be found in Appendix \ref{CRFtoRFV}. 
The advantage of the $\text{RFV}$ compared to the $\text{CRF}$ is that it also exists for univariate time-to-event data ($J=1$).
In the univariate as well as the multivariate case, the $\text{RFV}$ might be interpreted as the dispersion of observational unit-specific, conditional hazard rates relative to the population hazard, given an event-free cluster up to $\boldsymbol{t}$ in the multivariate case.
More precisely, 
\begin{align*}
\text{RFV}(\boldsymbol{t}) &= \frac{\lambda_0^{(j)}(t_j)^2 \operatorname{Var}(Z\mid \boldsymbol{T} > \boldsymbol{t})}{\lambda_0^{(j)}(t_j)^2 \operatorname{E}(Z\mid \boldsymbol{T} > \boldsymbol{t})^2} \\
&= \frac{\operatorname{Var}(\lambda^{(j)}(t_j\mid Z)\mid T^{(l)}> t_l\ \forall l \neq j)}{h^{(j)}(t_j\mid T^{(l)}> t_l\ \forall l \neq j)^2},
\end{align*}
which does not depend on the choice of $j$ in a time-invariant shared frailty model. 
The one-to-one relationship between the $\text{CRF}$ and the $\text{RFV}$ shows, firstly, that the $\text{RFV}$ can also be expressed as a function of the generic time scale $\Lambda$.
Secondly, statements about the heterogeneity as measured by the $\text{RFV}$ lead to equivalent statements about association as measured by the $\text{CRF}$ for shared frailty models.


The $\text{RFV}$ uniquely determines the Laplace transform of $Z$, $\mathcal{L} (s) = \int_0^\infty  \exp\{ -z s\} g(z)dz$, and, therefore, the density $g(z)$ of $Z$ as well as the (multivariate) population time-to-event survival function $S(\boldsymbol{t}) = P(\boldsymbol{T}>\boldsymbol{t})= \mathcal{L}(\Lambda(\boldsymbol{t}))$, up to the mean $\operatorname{E}(Z)$ \citep{Farrington.2012}.
This highlights that the $\text{RFV}$ is an important quantity that should be considered in the model-building process, as a misspecified $\text{RFV}$ leads to a misspecified model and possibly to invalid inference.
To the best of our knowledge, the $\text{RFV}$ has not been investigated for discrete frailty models thoroughly yet.
Hence, in the next section, we investigate the trajectories of the $\text{RFV}$ of numerous discrete frailty models as well as the {long-term} shapes of the $\text{RFV}$ that discrete frailty models are able to induce in general.

\section{Shapes of the RFV} \label{Shapes}



\subsection{Families of discrete frailty distributions and their RFV}
Among the discrete frailty distributions that we consider in this section are the negative binomial ($\mathcal{NB}$), binomial ($\mathcal{B}$), Poisson ($\mathcal{P}$), { as well as $k$-point} distributions.
We also consider the zero-modified Poisson ($\mathcal{ZMP}$) distribution in which the probability at $Z=0$ is modelled by an additional deflation/inflation parameter.
We further distinguish between the $\mathcal{NB}$ for which zero is an element of the support, i.e. with underlying RVa being the ``number of failures until $\nu$ successes are observed",
and the $\mathcal{NB}_{>0}$ for which zero is not an element of the support, i.e. with underlying RVa being the ``number of trials until $\nu$ successes are observed". 

Discrete frailty models in the literature are often used as cure rate models, i.e. $P(Z=0)>0$ \citep[see][]{Caroni.2010,Ata.2013,Souza.2017,Cancho.2018,Cancho.2020,Mohseni.2020,Cancho.2021,Molina.2021}. 
Discrete frailty models without cure rate were introduced by \cite{Caroni.2010} in a univariate setting, which can be adapted to the multivariate case as follows: $\lambda^{(j)}(t \mid Z=z) = z \lambda^{(j)}_0(t)+\lambda^{(j)}_+(t)$, with $\lambda^{(j)}_+(t) \in \mathbb{R}_{\geq 0}$ for all $t$. 
Expressed with a multiplicative hazard, the model might be depicted as $\lambda^{(j)}(t \mid Z=z) = (z +  p^{(j)}(t))\lambda^{(j)}_0(t)$, with $p^{(j)}(t)=\frac{\lambda^{(j)}_+(t)}{\lambda_0^{(j)}(t)}$.
This model does not correspond to equation \eqref{model} of a time-invariant (shared) frailty model in general, as  the frailty depends on time (and $j$).
With the proportionality restriction $p=\frac{\lambda^{(j)}_+(t)}{\lambda_0^{(j)}(t)}$ for $j=1,\dots,J$ and all $t$, the model is a (shared) frailty model according to equation \eqref{model}, i.e. the frailty neither depends on time $t$ nor on $j$. 
The frailty distribution is then some (discrete) distribution with support shifted by $p \in \mathbb{R}_{\geq 0}$.
The discrete (shared) frailty models of this kind that we investigate are the shifted negative binomial ($\mathcal{NB}+p$), shifted binomial ($\mathcal{B}+p$) and the shifted Poisson ($\mathcal{P}+p$).
We call frailty distributions, where the support is shifted by the additional parameter $p$, shifted models.

\cite{Caroni.2010} also suggested to truncate zero from the support of the discrete frailty distribution. 
This idea was taken up by \cite{Molina.2021}, where the truncation is implicitly considered by a special case of the zero-modified frailty distribution
and hence, we consider such a model through the $\mathcal{ZMP}$ distribution.
We call models, where zero is truncated from the support of the discrete frailty distribution, truncated models.
We call models, where the probability of $P(Z=0)$ is either deflated or inflated by means of an additional parameter, zero-modified models.

The use of the shifted and the zero-modified model have a drawback.
On the one hand, if zero is accommodated by the shifted model, the parameter $p$ has to be on the edge of the parameter space to allow for $P(Z=0)>0$.
On the other hand, in the zero-modified model, the deflation/inflation parameter has to be on the edge of the parameter space to correspond to the truncated model, i.e. for $P(Z=0)=0$.
This might lead to difficulties in the estimation process if one of those cases is the ``true" data-generating process or fits best to the data at hand.
We will show that even very small values for $p$ in the shifted model or $P(Z=0)$ in the zero-modified model, respectively, have an enormous influence on the $\text{RFV}$.

\cite{Farrington.2012} introduced the Addams family ($\mathcal{AF}$) of discrete frailty distributions. 
The $\mathcal{AF}$ circumvents the above-mentioned problem that either $P(Z=0)>0$ or $P(Z=0)=0$ is on the edge of the parameter space.
For this family it holds that $\text{RFV}(\boldsymbol{t}) = \gamma \exp \{\alpha \Lambda_0(\boldsymbol{t})\}$, with $\gamma>0$, and $\alpha \in \mathbb{R}$. 
If $\alpha < 0, Z$ is distributed as a scaled  $\mathcal{NB}_{>0}$. 
If $\alpha>0, Z$ is either a scaled $\mathcal{NB}$, $\mathcal{B}$ or $\mathcal{P}$, depending on $\mathcal{\gamma}$.
The case $\alpha = 0$ is a continuous exception, $Z$ is then gamma ($\mathcal{G}$) distributed.
Hence, the $\mathcal{AF}$ might be utilized as an exploratory tool, whether zero should be an element of the support of the discrete frailty distribution or not.
However, \cite{Farrington.2012} only introduced the $\mathcal{AF}$ of distributions; there are no estimation procedures available yet.
We recapitulate the shapes of the $\text{RFV}$ of the $\mathcal{AF}$ in this paper.

Before we start investigating the trajectory of the $\text{RFV}$, we will discuss the Laplace transform and $\text{RFV}$ for shifted and zero-modified models.
We start with the shifted model and denote $Z=Z_* + p$, where $Z_*$ induces the randomness in $Z$ and $p$ is the support shifting parameter.
Let $\mathcal{L}_*(s)$ be the Laplace transform of $Z_*$. 
Then, the Laplace transform of $Z$ is $\mathcal{L}(s)=\mathcal{L}_*(s) \exp\{-ps\}$.
The $\text{RFV}$ of the shifted model is $\text{RFV}(\boldsymbol{t}) = \text{RFV}_*(\boldsymbol{t}) \big[ \frac{{\mathcal{L}_*}' (\Lambda_0(\boldsymbol{t}))}{{\mathcal{L}_*}'(\Lambda_0(\boldsymbol{t})) - p\mathcal{L}_*(\Lambda_0(\boldsymbol{t}))} \big]^2$, where $\text{RFV}_*(\boldsymbol{t}) = \frac{\operatorname{Var}(Z_* \mid \boldsymbol{T}>\boldsymbol{t})}{\operatorname{E}(Z_* \mid \boldsymbol{T}>\boldsymbol{t})^2} = \frac{{\mathcal{L}_*}''(\Lambda_0(\boldsymbol{t})) {\mathcal{L}_*}(\Lambda_0(\boldsymbol{t}))}{{\mathcal{L}_*}'(\Lambda_0(\boldsymbol{t}))^2}-1$,
and $\mathcal{L}'(s),\mathcal{L}''(s)$ denote the first and second derivative of the Laplace transform, respectively.
Note that using $Z$ instead of $Z_*$ as RVa retains the relationship $\text{RFV}(\boldsymbol{t}) = \text{CRF}(\boldsymbol{t})-1$, which would not be the case for $\text{RFV}_*(\boldsymbol{t})$ as $\mathcal{L}_*(s)$ looses its connection to the population survival function $S(\boldsymbol{t}) = P(\boldsymbol{T}>\boldsymbol{t}) =\mathcal{L}(\Lambda_0(\boldsymbol{t}))$, and hence to the population hazard rate.
Thus, $\text{RFV}(\boldsymbol{t})$ is a suitable measure of association and heterogeneity, whereas $\text{RFV}_*(\boldsymbol{t})$ is not. 
Therefore, we argue that it makes more sense to interpret $p$ as a support-shifting parameter of the frailty distribution instead of being some additive constant in the model.

The frailty probability mass function (pmf) in the zero-modified model is defined as 
\begin{align*}
g(z) =  \begin{cases}
      \phi g_*(0) & \text{for $z = 0$}\\
       \frac{g_*(z)(1-\phi g_*(0))}{1-g_*(0)}& \text{for $z=1,2,\dots$}\\
    \end{cases},       
\end{align*}
where $g_*(z)$ is some discrete reference distribution, for example $\mathcal{P}$, and $\phi \in[0,\frac{1}{g_*(0)})$.
The parameter $\phi$ is the deflation/inflation parameter, which manipulates the probability at $Z=0$; $\phi = 0$ corresponds to the truncated model.
The Laplace transform of the zero-modified model is $\mathcal{L}(s) = \frac{1-\phi g_*(0)}{1-g_*(0)} \mathcal{L}_*(s) + g_*(0)\frac{\phi - 1}{1-g_*(0)}$, 
where $\mathcal{L}_*$ denotes the Laplace transform of the reference distribution.
The $\text{RFV}$ in the zero-modified model is $\text{RFV}(\boldsymbol{t}) = \text{RFV}_*(\boldsymbol{t}) + \frac{\phi - 1}{1 - \phi g_*(0)}g_*(0) \frac{\text{RFV}_*(\Lambda_0(\boldsymbol{t}))+1}{{\mathcal{L}_*}(\Lambda_0(\boldsymbol{t}))}$.
For shifted and zero-modified models, we will replace $\mathcal{L}_*(s)$ and $\text{RFV}_*(\Lambda_0(\boldsymbol{t}))$ by $\mathcal{L}_{\pi}(s)$ and $\text{RFV}_{\pi}(\Lambda_0(\boldsymbol{t}))$, respectively, if we refer to a specific reference distribution $\pi$.

Table \ref{table:RFVs} in Appendix \ref{ShapesApp} provides an overview of the Laplace transform, $\mathcal{L}(s)$, $\text{RFV}(\Lambda)$, ({detailed} information on) its derivative $\text{RFV}'(\Lambda)$, 
and support of $Z$ for $Z \sim \mathcal{NB}$, 
$\mathcal{NB}_{>0}$, 
$\mathcal{B}$, 
$\mathcal{P}$, 
$\mathcal{NB}+p$, 
$\mathcal{B}+p$, 
$\mathcal{P}+p$, 
the $\mathcal{AF}$, 
and the $\mathcal{ZMP}$ distribution.

Graphical illustrations of the resulting $\text{RFV}$ plotted against the generic time scale $\Lambda \in \{\Lambda_0(\boldsymbol{t}): \boldsymbol{t} \in \mathbb{R}^J_{\geq 0}\}$ for the aforementioned (shared) frailty models are given {in the subfigures \subref{fig:RFVploti} to \subref{fig:RFVplotvi} of} Fig. \ref{fig:RFVplot}.
The non-shifted models, $\mathcal{B}, \mathcal{P}$  are indicated by $p=0$ in the legend of Fig. \ref{fig:RFVplot}\subref{fig:RFVplotiii} and \ref{fig:RFVplot}\subref{fig:RFVplotiv}. 
Note that the intercept, steepness, and location of minima and maxima - if existent - of the $\text{RFV}$ depend on the choice of parameter values, but the general trajectories of the $\text{RFV}$ are sufficiently transparent in Fig. \ref{fig:RFVplot}.
{In the following, we will use the term ``tail" to denote the trajectory of either the $\text{RFV}$ from that point on the time axis $\Lambda$ onwards, where the last change in sign of the slope of the $\text{RFV}$ occurs or the entire trajectory if no change in sign of the slope occurs.

The trajectory of the $\text{RFV}$ of the $\mathcal{AF}$ model, which has already been discussed in \cite{Farrington.2012}, can be seen in Fig. \ref{fig:RFVplot}\subref{fig:RFVploti}.
The $\text{RFV}$ is monotone in any case, but can be strictly increasing, strictly decreasing or constant.
The unscaled members of the $\mathcal{AF}$ lead to monotone trajectories of the $\text{RFV}$ 
(see Fig. \ref{fig:RFVplot}\subref{fig:RFVplotii} for the $\mathcal{NB}$ and $\mathcal{NB}_{>0}$, Fig. \ref{fig:RFVplot}\subref{fig:RFVplotiii} for the $\mathcal{B}$, and Fig. \ref{fig:RFVplot}\subref{fig:RFVplotiv} for the $\mathcal{P}$).
As the $\mathcal{AF}$ contains all of those distributions, the trend of the $\text{RFV}$ is not set beforehand but is governed through $\alpha$. 
The $\text{RFV}$  is strictly increasing for $\alpha>0$, i.e. if the frailty distribution is either a scaled $\mathcal{NB},\mathcal{P}$, or $\mathcal{B}$.
In this case, $Z=0$ is an element of the support of the discrete frailty distribution.
If $\alpha <0$, the $\text{RFV}$ is strictly decreasing and the $\text{RFV}$ is a scaled $\mathcal{NB}_{>0}$.
In this case, $Z=0$  is not an element of the support of the frailty distribution.
If the scaling parameter $\alpha$ approaches zero, the frailty distribution approaches the $\mathcal{AF}$'s continuous exception, the $\mathcal{G}$ distribution. 
Thereby, a constant $\text{RFV}$ is a possible shape resulting from a member of the $\mathcal{AF}$, a shape that is impossible for a discrete shared frailty model to generate,
as the $\mathcal{G}$ is the only distribution with constant $\text{RFV}$.}
\begin{figure}
\centering
\begin{subfigure}{0.49\textwidth}
    \includegraphics[scale=.225]{{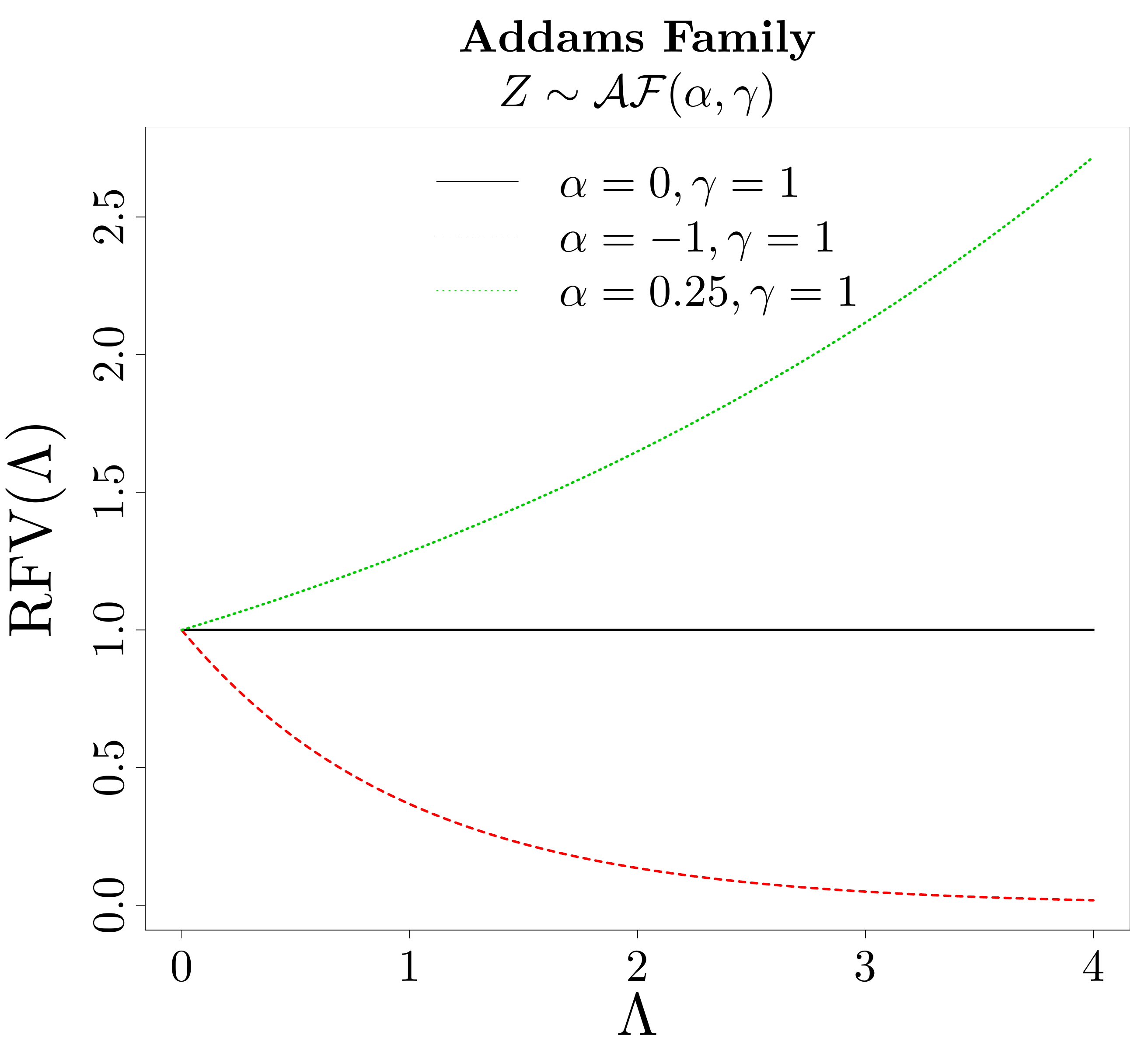}} 
        \caption{}    \label{fig:RFVploti}
\end{subfigure}
\begin{subfigure}{0.49\textwidth}
    \includegraphics[scale=.225]{{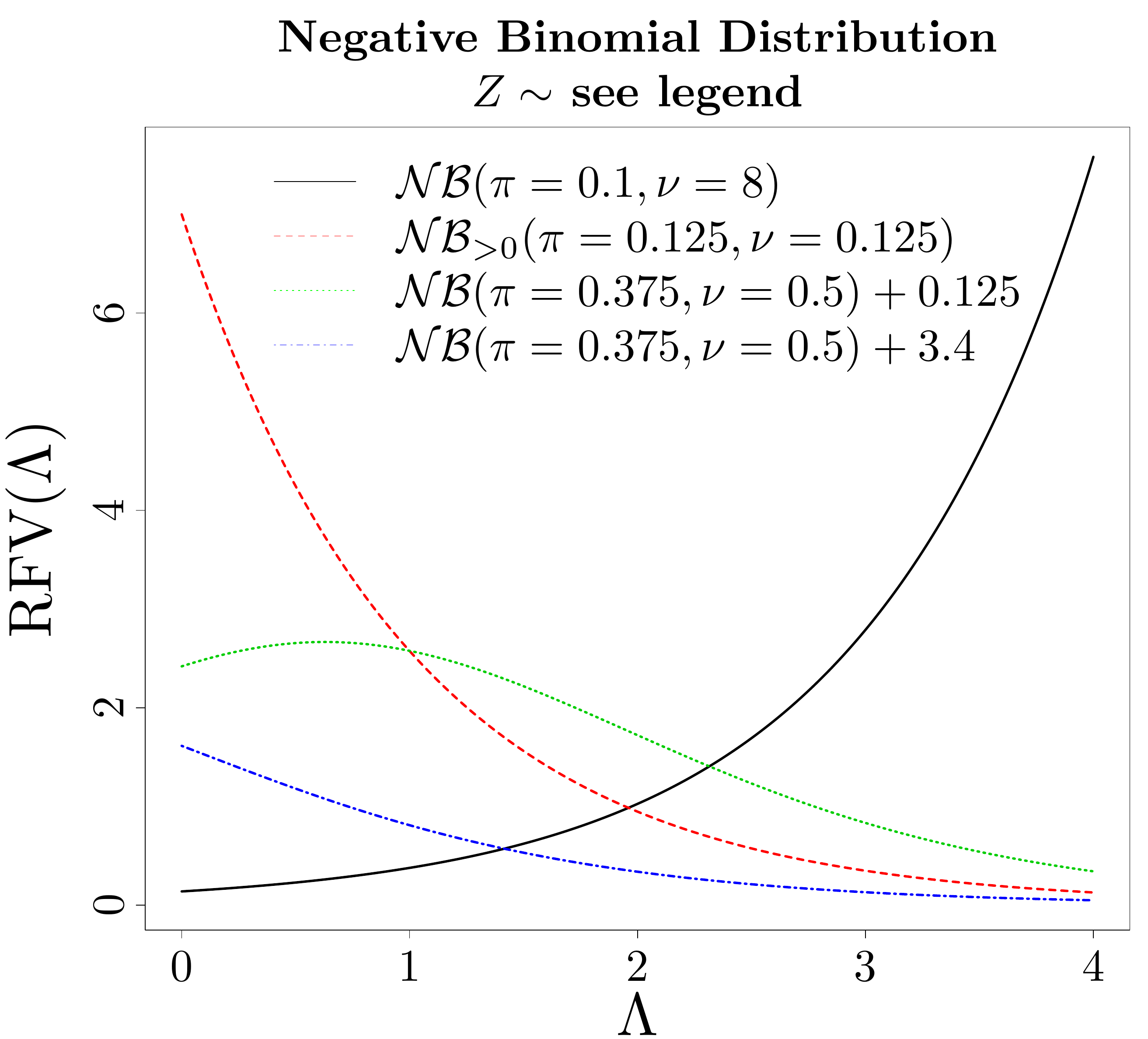}}
        \caption{}    \label{fig:RFVplotii}
\end{subfigure}
\begin{subfigure}{0.49\textwidth}
	\includegraphics[scale=.225]{{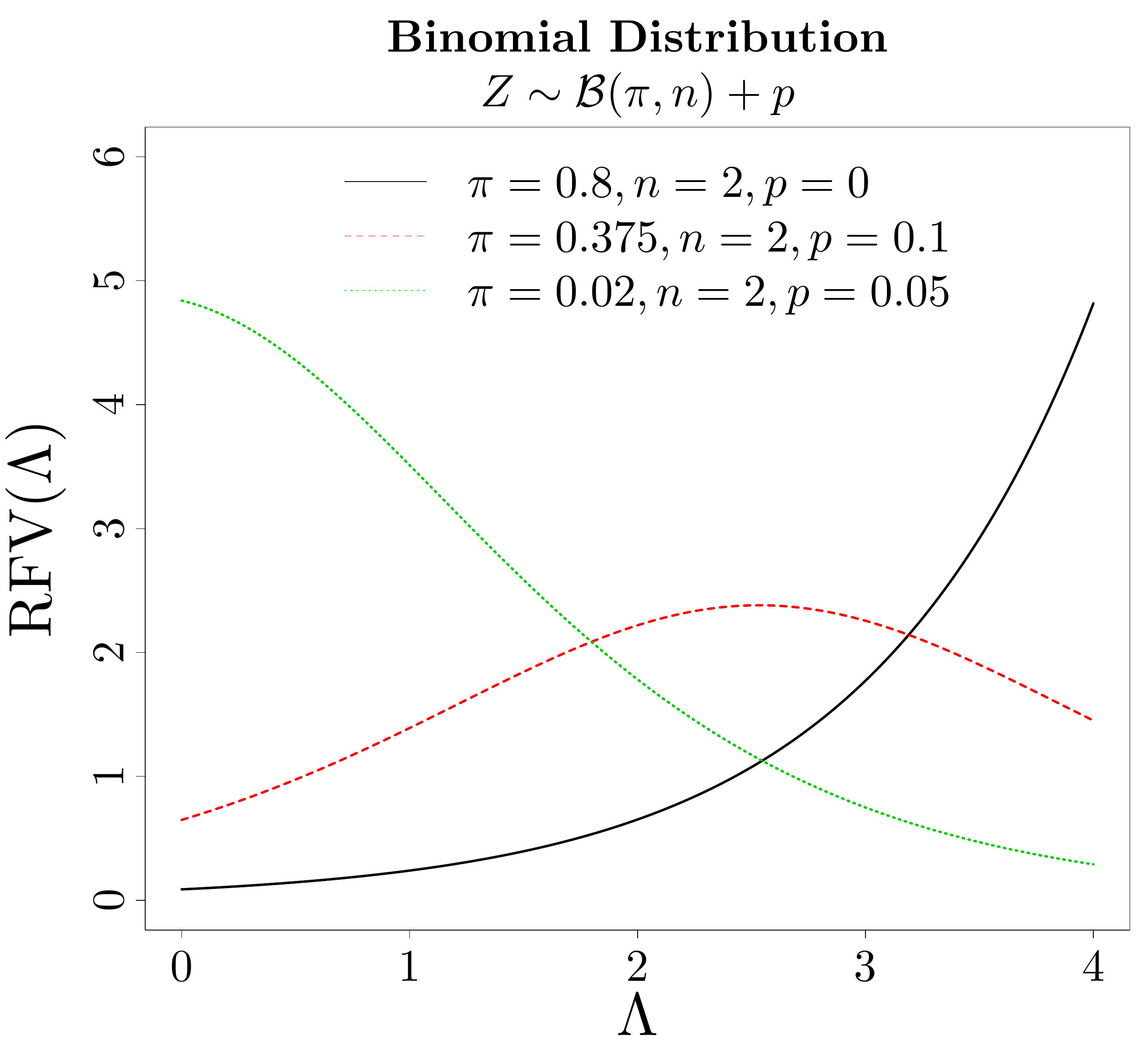}}
         \caption{}     \label{fig:RFVplotiii}
\end{subfigure}
\begin{subfigure}{0.49\textwidth}
    \includegraphics[scale=.225]{{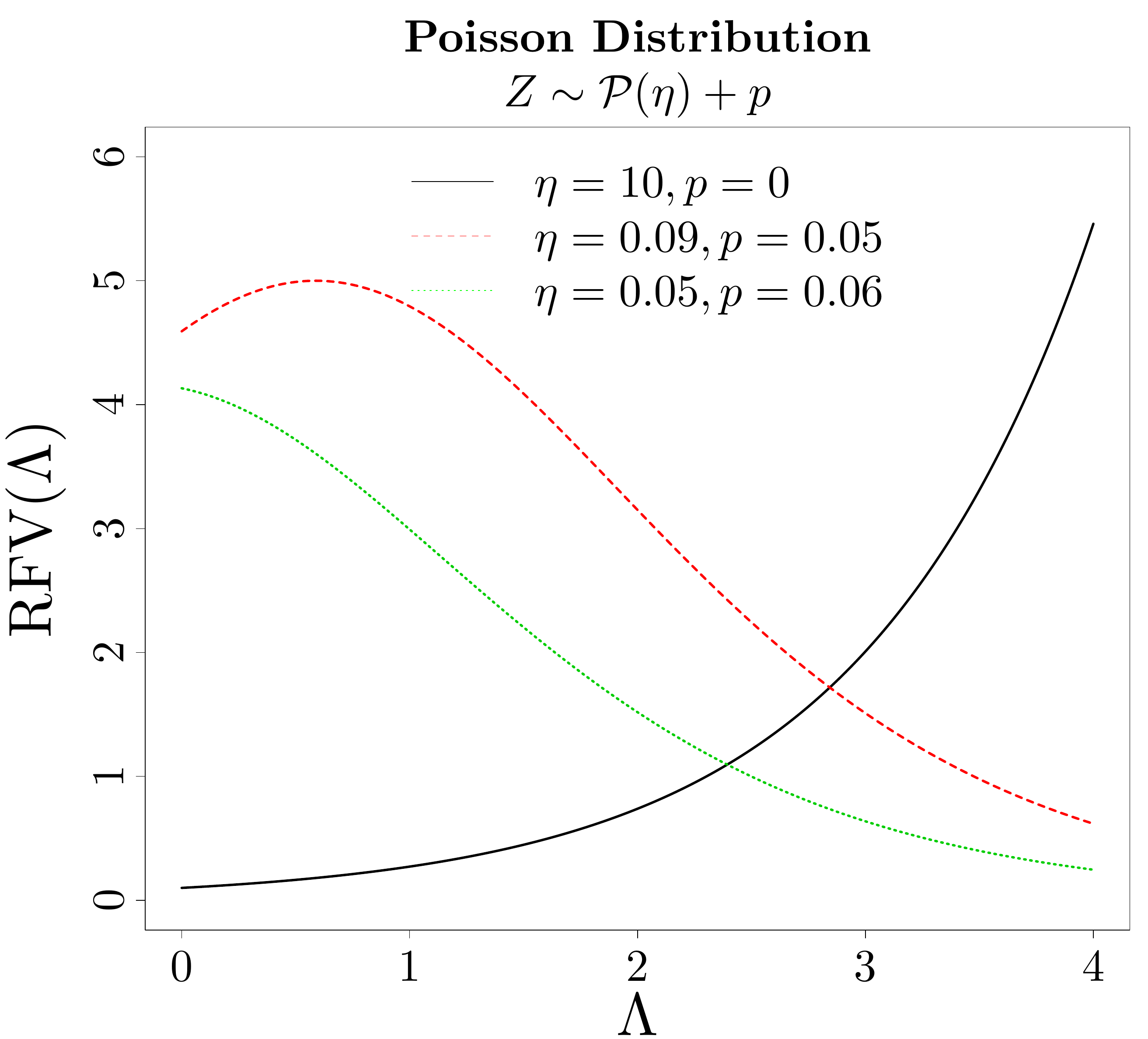}}
    \caption{}    \label{fig:RFVplotiv}
\end{subfigure}
\begin{subfigure}{0.49\textwidth}
    \includegraphics[scale=.225]{{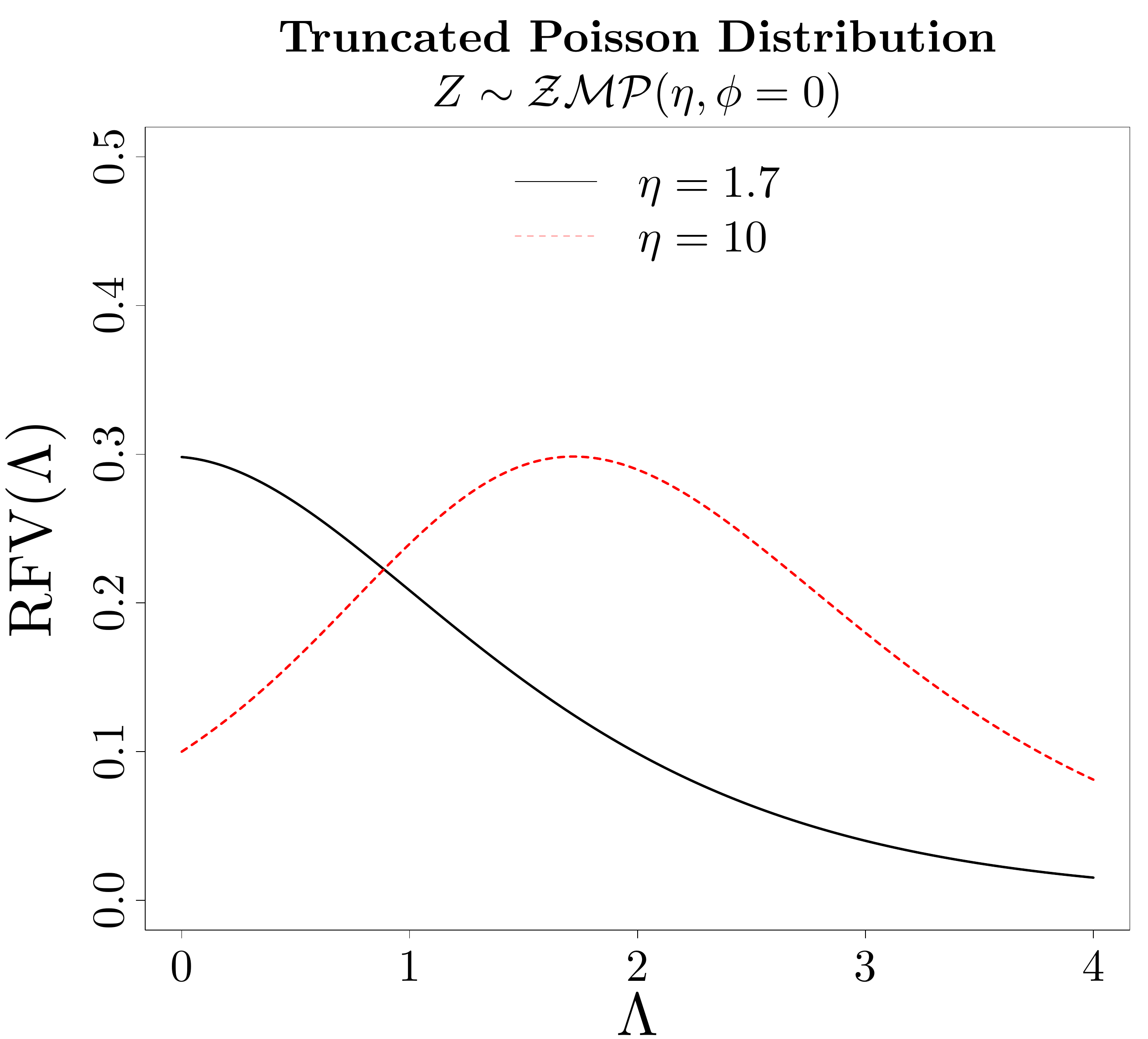}}
       \caption{}   \label{fig:RFVplotv}
\end{subfigure}
\begin{subfigure}{0.49\textwidth}
    \includegraphics[scale=.225]{{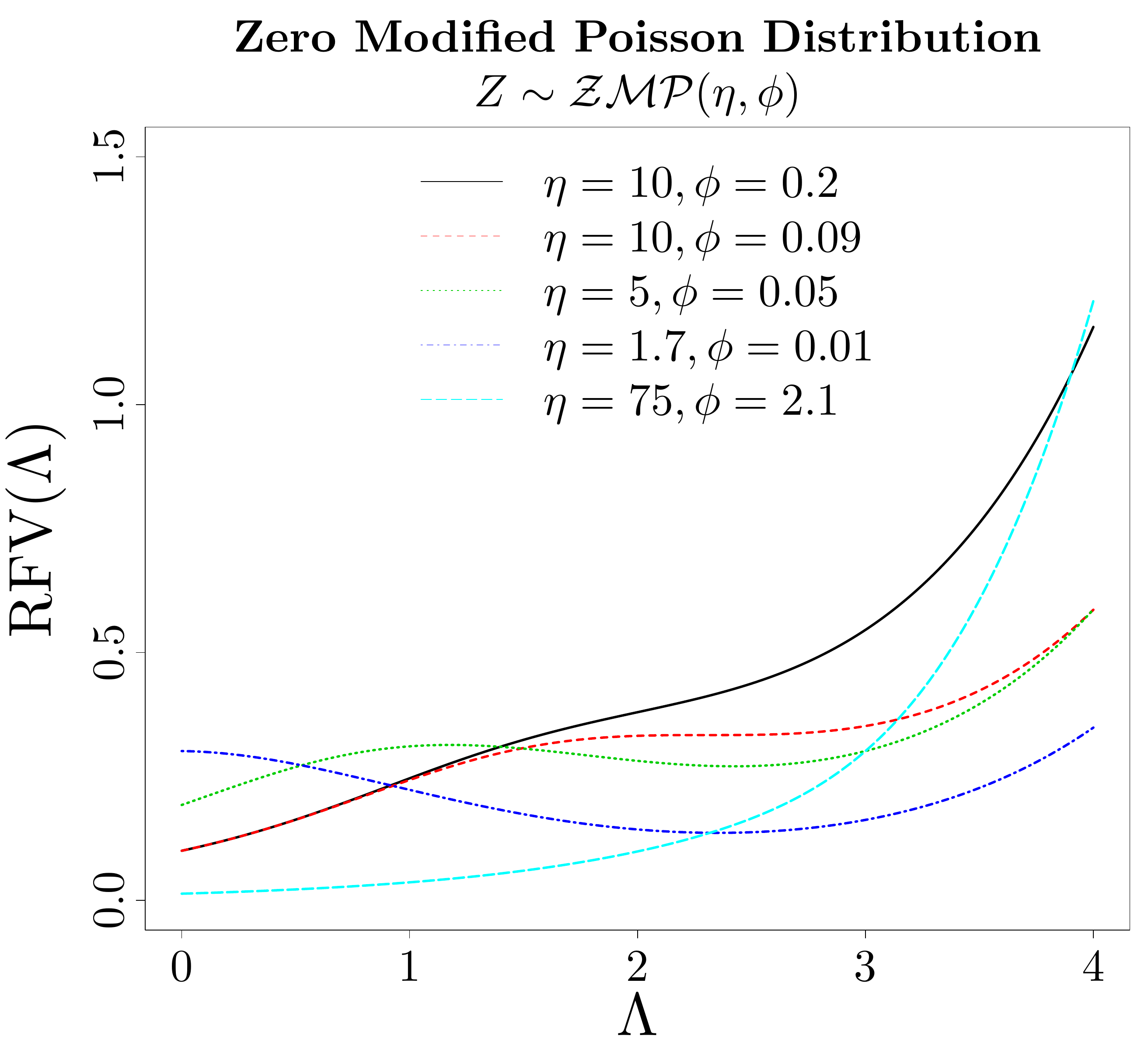}}
       \caption{}\label{fig:RFVplotvi}
\end{subfigure}
\caption{Shapes of the $\text{RFV}$ for discrete (shared) frailty models.}
\label{fig:RFVplot}
\end{figure}
The shifted models (Fig. \ref{fig:RFVplot}\subref{fig:RFVplotii},\ref{fig:RFVplot}\subref{fig:RFVplotiii},\ref{fig:RFVplot}\subref{fig:RFVplotiv}), and the $\mathcal{ZMP}$ model (Fig. \ref{fig:RFVplot}\subref{fig:RFVplotv},\ref{fig:RFVplot}\subref{fig:RFVplotvi}) have the interesting property that they are not restricted to a monotone trajectory of the $\text{RFV}$. 
The tail of the $\text{RFV}$ for the shifted models is monotonically decreasing for $p>0$, i.e. whenever $Z=0$ is not an element of the support of the frailty distribution.
A monotonically increasing tail of the $\text{RFV}$ arises only for the boundary case of $p=0$, i.e. if $Z=0$ is an element of the support of the frailty distribution.
The $\text{RFV}$ of the mentioned shifted models can have a single stationary point and therefore a non-monotone trajectory for $p>0$.


The trajectory of the $\text{RFV}$ for the $\mathcal{ZMP}$ model can be seen in Fig. \ref{fig:RFVplot}\subref{fig:RFVplotv} and \ref{fig:RFVplot}\subref{fig:RFVplotvi}.
For the $\mathcal{ZMP}$ model, the tail of the $\text{RFV}$ is monotonically increasing if $\phi>0$, i.e. if $Z=0$ is an element of the support of the frailty distribution
(Fig. \ref{fig:RFVplot}\subref{fig:RFVplotvi}).
For the boundary case of $\phi=0$, i.e. if $Z=0$ is not an element of the support of the frailty distribution, the tail of the $\text{RFV}$ is decreasing (Fig. \ref{fig:RFVplot}\subref{fig:RFVplotv}).
The $\text{RFV}$ of the $\mathcal{ZMP}$ model can have up to two stationary points.
In this case, which occurs only for $\phi>0$, a local maximum is followed by a local minimum. 
If the $\text{RFV}$ has a single stationary point, this is either a global minimum or a saddle point for $\phi>0$.
If $\phi=0$, a stationary point is the global maximum.


\begin{figure}
\centering
    \includegraphics[scale=.225]{{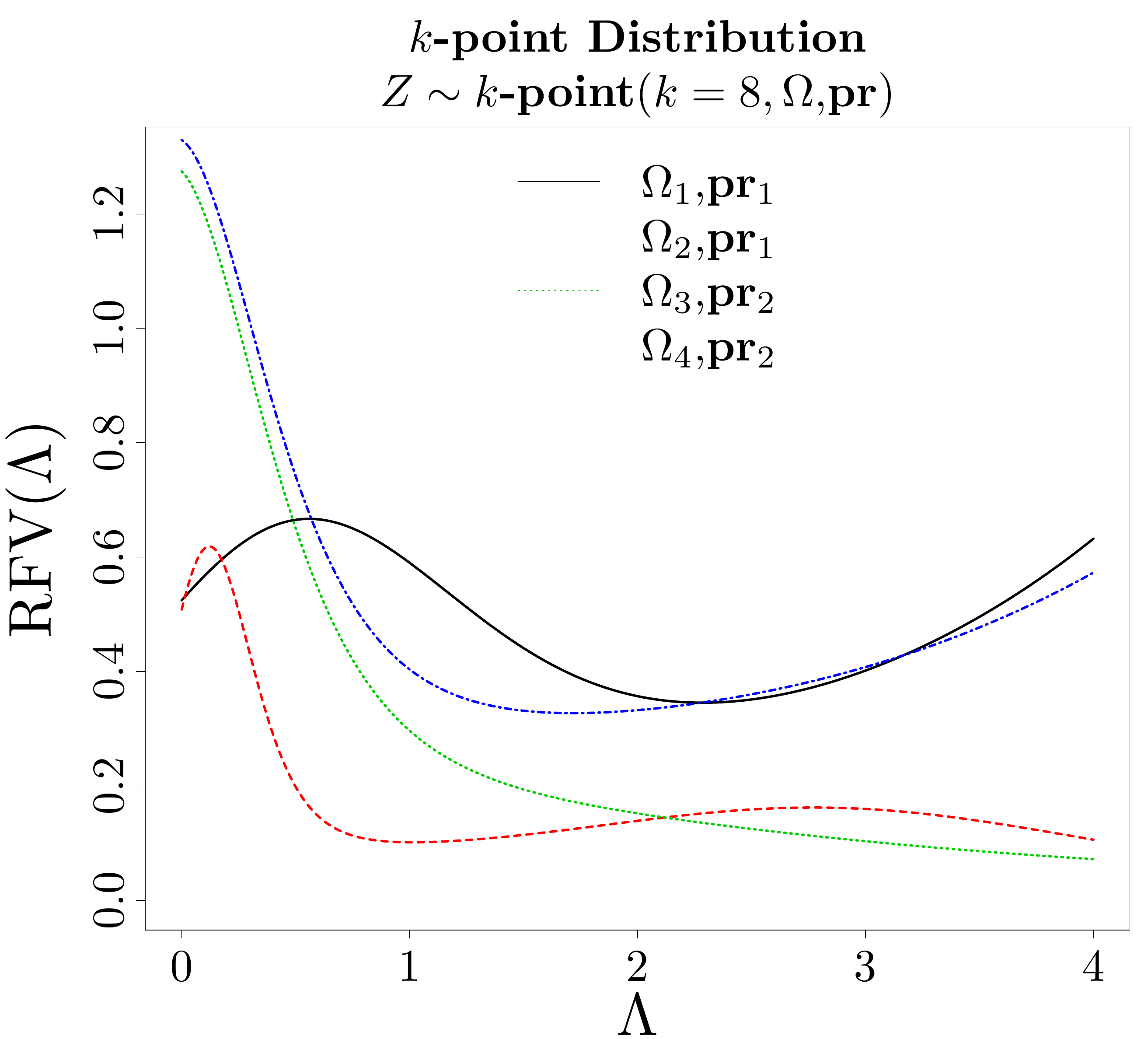}} 
\caption{Examples of shapes of the $\text{RFV}$ for $k$-point distributed frailties. The parameters are: {$\Omega_1=\{0.99,  2.02,  2.22,  2.41,  2.51,  2.52,  3.96, 10.44\}$}, $\Omega_2 = \frac{\Omega_1}{4}$ except that $z_{(1)}=0$, {$\Omega_3 = \{0.35, 0.41, 0.49, 0.60, 1.09, 1.39, 3.75, 5.63\}$}, $\Omega_4 = \Omega_3$ except that {$z_{(1)}=0$},
{\textbf{pr}$_1 = [0.03, 0.22, 0.01, 0.03, 0.18, 0.03, 0.16, 0.34]$}, { and \textbf{pr}$_2 = [0.04, 0.14, 0.25, 0.21, 0.17, 0.08, 0.07, 0.04]$}.}
\label{fig:kpoint}
\end{figure}

Figure \ref{fig:kpoint} shows some examples of the $\text{RFV}$ for the $k$-point distribution.
The parameters of the $k$-point distribution are the number of latent groups $k \in \mathbb{N}$, the support $\Omega=\{z_{(1)},\dots,z_{(k)}\} \in \mathbb{R}_{\geq 0}^k$, with $z_{(1)}<\dots<z_{(k)}$, as well as the probabilities pr$_{(m)}=P(Z=z_{(m)})$, for $m=1,\dots,k$, which we collect in the vector \textbf{pr}.
The $\text{RFV}$ of the $k$-point distribution equals $\frac{\sum_{m'=1}^k\sum_{m=1}^k z_{(m)}^2\exp\{-(z_{(m)}+z_{(m')})\Lambda\}\text{pr}_{(m)}\text{pr}_{(m')}}{\sum_{m'=1}^k\sum_{m=1}^k z_{(m)}z_{(m')}\exp\{-(z_{(m)}+z_{(m')})\Lambda\}\text{pr}_{(m)}\text{pr}_{(m')}}-1$.
Note that in contrast to Fig. \ref{fig:RFVplot}, the trajectories of the $\text{RFV}$ in Fig. \ref{fig:kpoint}, where $k$ is chosen to be $8$, are merely illustrative examples and the shapes of the $\text{RFV}$ are not exhaustively studied for the family of k-point distributions.
Moreover, note that Fig. \ref{fig:RFVplot}\subref{fig:RFVplotiii} represents a special case of the $2$-point distribution.
Monotone as well as non-monotone trajectories of the $\text{RFV}$ are observed.
Fig. \ref{fig:kpoint} shows trajectories with up to three stationary points.
For the examples in Fig. \ref{fig:kpoint}, a strictly decreasing tail arises if $z_{(1)} > 0$ and a strictly increasing tail arises if $z_{(1)}=0$.
We simulated a couple of examples of the $k$-point distribution and never found a distribution for $Z$ where the $\text{RFV}$ had more than $4$ stationary points even for very large $k$.

\subsection{Limiting behaviour of the RFV}
Note that for all models that are displayed in Fig. \ref{fig:RFVplot} {as well as the examples in Fig. \ref{fig:kpoint}}, an increasing tail of the $\text{RFV}$, and therefore of the $\text{CRF}$, arises only if the discrete frailty distribution has positive probability on $Z=0$,
whereas a decreasing tail occurs only if the discrete distribution has zero probability on $Z=0$ instead. 
This is no coincidence as the following Proposition shows.

\begin{proposition}\label{prop} Let $Z\geq0$ be a discrete RVa {for which the support can be arranged in ascending order, i.e. $\Omega = \{z_{(1)}, z_{(2)}, \dots\}$.
The support $\Omega$ is bounded from below by minimum $z_{(1)}$, but not necessarily from above.}
Then, in a time-invariant (shared) frailty model according to equation \eqref{model}, where the frailties $z$ are realizations of $Z$ and a $t_j \in \boldsymbol{t}$ exists such that $\lim_{t_j \to \infty} \Lambda_0(\boldsymbol{t}) \to \infty$, 
\begin{align*}
\lim_{t_j \to \infty} \textup{RFV}(\Lambda_0(\boldsymbol{t})) =
\begin{cases}
\infty & \text{ if }  z_{(1)} = 0  \textup{ (i)} \\
0 & \text{ if }  z_{(1)}>0 \textup{ (ii)}. 
\end{cases}
\end{align*}
\end{proposition}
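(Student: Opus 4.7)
The approach is to reduce the claim to asymptotic analysis of three Laplace-type sums by writing down the posterior distribution of $Z$ given $\boldsymbol{T}>\boldsymbol{t}$. Under model \eqref{model} with conditional independence of the targets one has $P(\boldsymbol{T}>\boldsymbol{t}\mid Z=z) = \exp\{-z\Lambda\}$ with $\Lambda := \Lambda_0(\boldsymbol{t})$, so Bayes' rule gives $P(Z=z_{(m)}\mid\boldsymbol{T}>\boldsymbol{t}) = \exp\{-z_{(m)}\Lambda\}\text{pr}_{(m)}/S_0(\Lambda)$, where
\begin{equation*}
S_r(\Lambda) := \sum_{m} z_{(m)}^r \exp\{-z_{(m)}\Lambda\}\text{pr}_{(m)},\qquad r\in\{0,1,2\}.
\end{equation*}
Since $\operatorname{E}(Z^r\mid\boldsymbol{T}>\boldsymbol{t}) = S_r(\Lambda)/S_0(\Lambda)$, the identity $\text{RFV}(\Lambda) = S_0(\Lambda)S_2(\Lambda)/S_1(\Lambda)^2 - 1$ holds and reduces the proposition to pinning down the leading behaviour of $S_0,S_1,S_2$ as $\Lambda\to\infty$.

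For case (ii), with $z_{(1)}>0$, I will factor $\exp\{-z_{(1)}\Lambda\}$ out of every $S_r$. The rescaled summand $z_{(m)}^r\exp\{-(z_{(m)}-z_{(1)})\Lambda\}\text{pr}_{(m)}$ equals $z_{(1)}^r\text{pr}_{(1)}$ at $m=1$ and tends to $0$ for $m\geq 2$, so $S_r(\Lambda)\sim z_{(1)}^r\text{pr}_{(1)}\exp\{-z_{(1)}\Lambda\}$. Substituting, $S_0 S_2/S_1^2\to 1$ and $\text{RFV}(\Lambda)\to 0$. Case (i), $z_{(1)}=0$, splits the sums: the atom at zero contributes only to $S_0$, giving $S_0(\Lambda)\to\text{pr}_{(1)}>0$, while for $r\in\{1,2\}$ the $m=1$ term vanishes and I factor $\exp\{-z_{(2)}\Lambda\}$ out of what remains (with $z_{(2)}>0$ the next smallest support point, which must exist for $Z$ to be non-degenerate). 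The same convergence argument yields $S_r(\Lambda)\sim z_{(2)}^r\text{pr}_{(2)}\exp\{-z_{(2)}\Lambda\}$ for $r=1,2$, so $S_0 S_2/S_1^2 \sim (\text{pr}_{(1)}/\text{pr}_{(2)})\exp\{z_{(2)}\Lambda\}\to\infty$ and $\text{RFV}(\Lambda)\to\infty$.

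The main obstacle is legitimising the term-by-term limits when $|\Omega|$ is infinite, because the weights $z_{(m)}^r$ are unbounded in $m$. My plan is to invoke dominated convergence: for any fixed $\Lambda_0>0$ the sum $S_r(\Lambda_0)$ is finite because $z^r\exp\{-z\Lambda_0\}$ is bounded on $[0,\infty)$, and for $\Lambda\geq\Lambda_0$ each rescaled summand is bounded above by its value at $\Lambda_0$ multiplied by a finite constant depending only on $z_{(1)}$ or $z_{(2)}$. This integrable majorant collapses each infinite sum to its leading term and handles both cases uniformly, irrespective of whether $\Omega$ is finite or infinite.
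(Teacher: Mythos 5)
Your proof is correct, and it reaches the conclusion by a noticeably leaner route than the paper's. Both arguments ultimately rest on a Tannery/dominated-convergence step that collapses an infinite sum to the contribution of the smallest support point, but the decompositions differ. The paper works with the conditional pmf $g(z,\Lambda)$ and the conditional moments $\operatorname{E}(Z^q\mid \boldsymbol{T}\geq\boldsymbol{t})$, and constructs its summable majorant $M_k = z_{(k)}^q g(z_{(k)})$ by a monotonicity analysis of $\partial g(z_{(k)},\Lambda)/\partial\Lambda$; this forces the assumption that $\operatorname{E}(Z)$ and $\operatorname{E}(Z^2)$ exist and requires a separate appendix argument (restarting the clock at some $\boldsymbol{t}^*$ with $\Lambda_0(\boldsymbol{t}^*)=1$) when they do not, and it delegates case (i) to the mixture result of Farrington (2012). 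You instead work with the unnormalized sums $S_r(\Lambda)$, write $\text{RFV} = S_0S_2/S_1^2-1$, factor out the dominant exponential $\exp\{-z_{(1)}\Lambda\}$ (respectively $\exp\{-z_{(2)}\Lambda\}$ when $z_{(1)}=0$), and dominate the rescaled summands for $\Lambda\geq\Lambda_0$ by their values at a fixed $\Lambda_0>0$, which are summable simply because $z^r\exp\{-z\Lambda_0\}$ is bounded on $[0,\infty)$. This buys you two things: no moment-existence hypothesis is needed (the paper's supplementary note becomes unnecessary), and case (i) is handled directly and symmetrically with case (ii) rather than by citation. The only implicit assumptions you make --- non-degeneracy of $Z$ (so that $z_{(2)}$ exists and $\text{pr}_{(2)}>0$) and the ascending-order hypothesis guaranteeing $z_{(m)}-z_{(1)}\geq z_{(2)}-z_{(1)}>0$ for $m\geq2$ --- are the same ones the paper relies on, so nothing is lost in generality.
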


\begin{proof}
The proof of Proposition \ref{prop} can be found in Appendix \ref{proof}.
\end{proof}

Hence, if the assumptions of Proposition \ref{prop} hold and the $\text{RFV}$ ($\text{CRF}$) has a tail, the tail of the $\text{RFV}$ ($\text{CRF}$) is strictly increasing if and only if the smallest frailty is zero and it is strictly decreasing if and only if the smallest frailty is larger than zero in a discrete time-invariant (shared) frailty model.

{The results of Proposition \ref{prop} can also be extended to a certain kind of discrete time-varying (shared) frailty model.
Let $Z(t)$ be piecewise constant on disjoint time intervals, i.e. $Z(t)=\sum_{q=1}^Q \mathcal{I}(t \in [t_{q-1},t_{q}))Z_q$,
with $t_{Q} = \infty > t_{Q-1} >\dots >t_1 > 0 = t_0$, $\mathcal{I}$ being the indicator function which is one if the expression in the brackets is true and zero otherwise, and each $Z_q$, with $q=1,\dots, Q$, being a discrete RVa.
Then, the support of $Z(t_Q)$ 
determines the long-run behaviour of the $\text{RFV}$ ($\text{CRF}$) according to Proposition \ref{prop}.
The corresponding proof can be thought of as stepping in at time $t_Q$.
Note that we do not claim that this is the least restrictive set of assumptions such that the results of Proposition \ref{prop} carry over to discrete time-varying frailty models, but the easiest set of assumptions to adapt the proof to the time-varying scenario.
The adaption of the proof is also outlined in Appendix \ref{proof}.}

The results of Proposition \ref{prop} can be interpreted as follows.
We start with the cluster perspective, focusing on the $\text{CRF}$, and then take the heterogeneity perspective, i.e. focusing on the $\text{RFV}$.
The interpretation of both perspectives relies on the fact that $P(Z=z_{(1)} \mid \boldsymbol{T}> \boldsymbol{t})$ increases with $\boldsymbol{t}$ and eventually approaches one.

In the case of $z_{(1)}=0$, clusters can increasingly well be separated into non-frail and frail clusters with increasing time.
In that sense, event-free clusters are more and more unlikely to experience any events in the future, as it becomes more and more likely that those clusters are not frail.
Frail clusters, however, will experience events after some timespan. 
This means that event-free clusters and partially event-free clusters up to some large $\boldsymbol{t}$, can almost be perfectly separated in those clusters, who will never experience any event and those who will eventually experience all events.
Hence, the association within a cluster in terms of the $\text{CRF}$ approaches infinity.

In the case of $z_{(1)}>0$, all clusters are frail. 
Hence, all clusters eventually experience events.
As clusters with $z_{(1)}$ eventually dominate the population,
but also experience events, the hazard of those clusters dominates the population hazard for large $\boldsymbol{t}$ in the definition of the $\text{CRF}$.
Thus, for large $\boldsymbol{t}$, information on the presence or absence of events does not supply relevant information on the hazard rate of the cluster.
Hence, association within a cluster, as represented by the $\text{CRF}$, diminishes in the long-run.

In the context of heterogeneity, i.e. focusing on the $\text{RFV}$, the expectation of $\lambda^{(j)}(t \mid Z)$ given an event-free observational unit, $h^{(j)}(t_j \mid T{(l)>t_l, l \neq j})$, approaches $z_{(1)} \lambda^{(j)}_0(t)$  for large $\boldsymbol{t}$.
The variation of observational unit-specific hazard rates, $Z \lambda^{(j)}_0(t)$, given the absence of events approaches zero,
as clusters with frailty $z_{(1)}$ dominate the population for large $\boldsymbol{t}$. 
Hence, if $z_{(1)}>0$, the variation of observational unit-specific hazard rates of survivors relative to the squared population hazard approaches zero.
In the case of $z_{(1)}=0$, however, the population hazard rate $h^{(j)}(t_j \mid T{(l)>t_l, l \neq j})$ approaches zero and the variation relative to the squared population hazard approaches infinity. 
    
Proposition \ref{prop} also shows the importance of having flexibility in the support of the discrete frailty distribution. 
Restrictions on the support determine the trajectory of the tail of the $\text{RFV}$.
In many cases, the choice of the smallest possible frailty value determines the entire trajectory of the $\text{RFV}$, as most standard discrete frailty distributions have a monotone $\text{RFV}$. 
More flexible models regarding the support might also be exploited to verify the assumption of a cure rate.
In this sense, an estimated decreasing $\text{RFV}$ can be seen as counter-evidence of the existence of a cure rate, whereas an increasing $\text{RFV}$ might be regarded as an indication of the presence of a cure rate. 

Proposition \ref{prop} also implies that in a discrete (shared) frailty setting, the $\text{RFV}$ (and hence the $\text{CRF}$) can never converge to a positive asymptote (greater than one), which might be seen as a disadvantage. 
This makes it even more important that continuous special cases are nested within an otherwise discrete family of distributions, for which the $\text{RFV}$ can converge towards a positive asymptote, such as the $\mathcal{G}$ in the $\mathcal{AF}$ for which the $\text{RFV}$ is constant.

Alternatively, the underlying assumptions of Proposition \ref{prop} might be violated such that the limiting behaviour of the $\text{RFV}$ differs from the two extreme scenarios of the (shared) frailty model as stated in Proposition \ref{prop}. We discuss two examples; firstly, a correlated frailty model and secondly, models with time-varying support.

In the multivariate context ($J>1$), discrete correlated frailty models might be utilized to obtain a model for which the $\text{CRF}$ is converging towards a positive asymptote greater than one.
To show this, we briefly consider the models introduced by \cite{Cancho.2020b}.
The conditional, target-specific hazard rate is assumed to be $\lambda^{(j)}(t \mid Z^{(j)}=z^{(j)}) = z^{(j)} \lambda^{(j)}_0(t)$, for $j=1,\dots,J$.
Given $W=w$, the frailty RVas $Z^{(j)}$, $j=1,\dots,J,$ are distributed as $\mathcal{P}(\eta_j w)$ and conditionally independent for any pair $Z^{(j)}$ and $Z^{(j')}$ and $j\neq j'$.
The RVa $W\geq 0$ is the association-inducing feature in the model, which has distribution $\pi_W$.
The correlation of $Z^{(j)}$ and $Z^{(j')}$ equals $\frac{\operatorname{Var}(W) \sqrt{\eta_{j} \eta_{j'}}}{\sqrt{(\operatorname{Var}(W)\eta_j+\operatorname{E}(W))(\operatorname{Var}(W)\eta_{j'}+\operatorname{E}(W))}}
$, for $j\neq j'$, which has values in the interval $(0,1]$ for $\operatorname{Var}(W),\eta_j,\eta_{j'}>0$.
This model is mathematically convenient, as the Laplace transform of the $\mathcal{P}$ is an exponential function (see Table \ref{table:RFVs} in Appendix \ref{ShapesApp}) and hence, $S({\boldsymbol{t} \mid W=w}) = \exp\{-w d(\boldsymbol{t})\}$, with $d(\boldsymbol{t}) = \sum_{j=1}^J \eta_j (1-\exp\{-\Lambda_0^{(j)}(t_j)\}$. 
Thus, $S(\boldsymbol{t}) = \mathcal{L}_{\pi_W}(d(\boldsymbol{t}))$, i.e. the survival function is determined by the Laplace transform of $W$ which is identical to a shared frailty model with RE distribution $W$, except for the fact that its implicit argument is determined by the $\mathcal{P}$ distributions through $d \in \{ d(\boldsymbol{t}):\boldsymbol{t} \in \mathbb{R}_{\geq 0})\}$.
Note that $\lim_{t_j \to \infty}d(\boldsymbol{t}) = \eta_{j} + \sum_{j' \neq j} \eta_{j'} (1-\exp\{-\Lambda_0^{(j')}(t_{j'})\}$ and
$\lim_{\boldsymbol{t} \to \infty} d(\boldsymbol{t}) = \sum_{j =1}^J \eta_j $.

\cite{Cancho.2020b} investigate the $\text{CRF}$ for $W \sim \mathcal{G}$ in their proposed correlated frailty model.
However, we assume an arbitrary choice of $\pi_W$.
As the cross-derivative of $d(\boldsymbol{t})$ with respect to $t_j$ and $t_{j'}$, for $j \neq j'$, equals zero, the $\text{CRF}$ is independent of the choice of the tuple $(j,j')$.
Moreover, the $\text{CRF}$ can be expressed as a function of $d$ and is solely determined through the Laplace transform of $W$, i.e.
\begin{align*}
\text{CRF}(\boldsymbol{t}) &= \text{CRF}(d(\boldsymbol{t})) \\ 
&= \frac{\mathcal{L}_{\pi_W}''(d(\boldsymbol{t})) \mathcal{L}_{\pi_W}(d(\boldsymbol{t}))}{\mathcal{L}'_{\pi_W}(d(\boldsymbol{t}))^2}.
\end{align*}
{Consequently, $W$ determines the shape of the $\text{CRF}$ over $d \in [0,\sum_{j=1}^J \eta_j]$, as if it were the RE in a shared frailty model.
However, the $\mathcal{P}$ distributions determine how quickly this shape evolves over time $\boldsymbol{t}$ through $d(\boldsymbol{t})$.
Eventually, the $\mathcal{P}$ distributions also set the limit of the $\text{CRF}$, $\lim_{\boldsymbol{t} \to \infty} \text{CRF}(d(\boldsymbol{t})) = \text{CRF}(\sum_{j=1}^J \eta_j)$, which will typically be a positive constant greater than one.}
Some examples of trajectories of the $\text{CRF}$ for a corresponding choice of $\pi_W$ can be seen in Fig. \ref{fig:RFVplot}, where only the axis of abscissas needs to be truncated at $\sum_{j=1}^J \eta_j$ and the value of one has to be added to the axis of ordinates.
Hence, within the class of discrete correlated frailty models introduced by \cite{Cancho.2020b}, the $\text{CRF}$ is forced to approach a positive constant greater than one in general, which is in contrast to discrete shared frailty models.
Note that the $\text{CRF}$ looses its connection to the $\text{RFV}$ in the case of correlated frailty models.

{ So far we only considered a subset of the models introduced by \cite{Caroni.2010}, namely the shifted models.
The shifted models fall into the jurisdiction of Proposition \ref{prop}.}
However, if we allow the shifting parameter to be time-dependent, the models do not satisfy our definition of a time-invariant (shared) frailty model according to equation \eqref{model} and hence violate the assumptions of Proposition \ref{prop}.
The discrete latent RV for such models is $Z(t) = Z_* + p(t)$, where $Z_*\geq 0$ is some discrete RV and $p(t)\geq 0$ is the time-dependent shifting parameter.
The hazard rate equals $\lambda^{(j)}({t \mid Z_*=z_*}) = z(t)\lambda^{(j)}_0(t)$, where $z(t)=z_*+p(t)$. 
The $\text{RFV}$ of this model equals $\text{RFV}(t) = \text{RFV}_*(\Lambda_0(t)) \big[ \frac{{\mathcal{L}_*}' (\Lambda_0(t))}{{\mathcal{L}_*}'(\Lambda_0(t)) - p(t)\mathcal{L}_*(\Lambda_0(t))} \big]^2$, similar to those of the shifted models. 
If $Z_*$ is chosen to be $\mathcal{P}$ distributed and $p(t)=\exp\{-\frac{\Lambda(t)}{2}\}\eta$,
then $\text{RFV}(t) = \frac{1}{\eta} \big[   \exp\{-\frac{\Lambda(t)}{2} \}+1  \big]^{-2}$ which converges to $\frac{1}{\eta}$. 
Note that $p(t)$ approaches zero in the long run is a necessary condition for the $\text{RFV}$ to approach a positive asmyptote for this class of models (irrespective of the choice of the $\mathcal{P}$ distribution), but not a sufficient condition.
If $p(t)=\exp\{-\frac{\Lambda(t)}{2}\}\eta (\sin(\Lambda(t))+2)$ instead, the $\text{RFV}(t)=\frac{1}{\eta} \big[  \exp\{-\frac{\Lambda(t)}{2} \}+\sin(\Lambda(t))+2 \big]^{-2}$, {which neither converges nor approaches infinity}. 
If $p(t) = \exp\{-\Lambda(t)\}\eta$ the $\text{RFV}$ approaches infinity.
{If $p(t)>\epsilon$ for some $\epsilon \in \mathbb{R}_{>0}$ and all $t$ greater than some time threshold, the $\text{RFV}$ will approach zero, irrespective of the choice of the $\mathcal{P}$ distribution for $Z_*$.}


\section{Discussion} \label{conc}

%
%

For time-invariant discrete univariate and shared frailty models, we have investigated the trajectory of the $\text{RFV}\ (\text{CRF})$. 
We have shown that the choice of the smallest frailty category determines the long-run behavior of both functions. 
If the frailty distribution, for which the support can be ordered from the smallest to larger values,
includes an atom at zero, such as models based on the classical $\mathcal{B}$, $\mathcal{NB}$, and $\mathcal{P}$ distributions do, the $\text{RFV}\ (\text{CRF})$ will approach infinity. 
If the discrete frailty distribution does not contain an atom at zero, the $\text{RFV}\ (\text{CRF})$ will approach zero (one).

Families of discrete distributions showing more flexibility with respect to the functional form of the $\text{RFV}$, that is, discrete families including types of distributions with and without an atom at zero, seem to be more adequate as a general modelling approach and may serve as a diagnostic tool for finding evidence for the existence of a cure rate for the data at hand. 
The Addams family, the zero-modified power series distribution as well as shifted distributions fulfill that criterion.

Among the distributions that are discussed in this paper, the Addams family introduced by \cite{Farrington.2012} has the advantage that neither a cure rate nor its absence is on the edge of the parameter space. 
So far no estimation procedures are existent for the Addams family, however.
Alternatively, zero modified power series distributions might be deployed, which manipulate the cure rate by means of an additional parameter. 
The absence of a cure rate, however, is on the edge of the parameter space of the deflation and inflation parameter, which might lead to difficulties in the estimation process for data where a cure rate seems inappropriate.
As illustrated in Section \ref{Shapes}, a straightforward way to achieve flexibility in the trajectory of the $\text{RFV}$ is to shift the support of the frailty distribution by means of an additional parameter, which varies between zero and infinity.
In that case, however, a cure rate is on the edge of the parameter space of the shifting parameter. 

Such shifted and zero-modified models have the interesting property that the $\text{RFV}$ is not restricted to be monotone.
This is a rare property for time-invariant frailty distributions, as, for example, the power variance family, the extended generalized gamma family, and the Kummer family lead to a monotone trajectory of the $\text{RFV}\ (\text{CRF})$.
A discussion of the three families, accompanied by an investigation of their $\text{RFV}$ can be found in \cite{Farrington.2012}.  

We have shown that for time-invariant discrete univariate and shared frailty distributions, the $\text{RFV}\ (\text{CRF})$ is not able to converge to a positive asymptote (greater than one). 
This is in contrast to existing (families of) continuous distributions, e.g. the Kummer family. 
The Addams family of discrete frailty distributions circumvents this issue by including the gamma distribution as a continuous special case that has a constant $\text{RFV}$ ($\text{CRF}$) and hence a positive asymptote (greater than one), unless the frailty variance is zero.
This case might be approached in the estimation process if the non-zero, but finite $\text{RFV}$ in the long run is important for modelling the data at hand.

A positive asymptote of the $\text{RFV}$ might be approached if we leave the framework of a time-invariant univariate or shared frailty model.
If the support-shifting parameter of the { discrete frailty} models introduced by \cite{Caroni.2010} is allowed to be time dependent, a positive asymptote {of the} $\text{RFV}$ might be approached if the support-shifting parameter converges towards zero in the long run.
In the multivariate setting with discrete frailty, the $\text{CRF}$ approaches a positive constant greater than one { for the class of discrete correlated frailty models introduced by \cite{Cancho.2020b}.
The $\text{RFV}$ and the $\text{CRF}$ loose its connection in that case}.

Our focus was on discrete frailty models in which the random effect is time-invariant.
We have extended our results to piecewise constant discrete time-varying shared frailty models, however. 
For this class of frailty models, the $\text{RFV}$ ($\text{CRF}$) approaches either infinity or zero (one), depending on whether the final random effect includes zero in its support or not.
The shifted models inspired by \cite{Caroni.2010} discussed in Section \ref{Frailty} might be exploited to generate more flexible, non-monotone trajectories of the $\text{RFV}\ (\text{CRF})$, by leaving the setting of a time-invariant univariate and shared frailty model and letting the support-shifting parameter of the frailty distribution be piecewise constant on disjoint time intervals.
This would keep the model structure relatively simple and might lead to several switches of periods with decreasing and increasing $\text{RFV}$ ($\text{CRF}$). 
Additionally, {this model is} a special case of the aforementioned piecewise constant time-varying frailty models and hence the $\text{RFV}$ ($\text{CRF}$) either approaches zero (one) or infinity depending on whether the last shift is greater than zero or equal to zero, respectively.

The $\text{RFV}$ of discrete frailty models has shown more diverse profiles compared to existing standard continuous frailty distributions.
This { property} might be exploited to develop time-varying correlated discrete frailty models with flexible trajectories of the $\text{RFV}$ ($\text{CRF}$).



\backmatter

\bmhead{Acknowledgments}
This research has received funding from the Deutsche Forschungsgemeinschaft
(DFG; German Research Foundation, grant number UN 400/2-1).
We thank Dr. Fabian K\"uck for comments on the paper.



\begin{appendices}

\section{The redefined CRF and its connection to the RFV} \label{CRFtoRFV}
 
We  show that $\text{CRF}(\boldsymbol{t})=1+\text{RFV}(\boldsymbol{t})$ for $J\geq2$.
Let $S(\boldsymbol{t})$ denote the population survival function, i.e. $S(\boldsymbol{t}) = P(\boldsymbol{T}>\boldsymbol{t})$ (given covariates).
Note, that we utilize Lagrange's notation of derivatives for functions with a single argument (e.g. $\mathcal{L}$),
but Leibniz's notation for functions with multiple arguments (e.g. $S, \Lambda$).
The population hazard of $T^{(j)}$ given $T^{(j')}=t_{j'}$, $j' \neq j$, is
\begin{align*}
h^{(j)}(t_j \mid T^{(j')}=t_{j'}, T^{(l)} >t_l\ \forall l \neq  j,j') 
&= -\frac{\partial^2 S(\boldsymbol{t}) / \partial t_j \partial t_{j'}}{\partial S(\boldsymbol{t}) / \partial t_{j'}}.  
\end{align*}
Re-expressing this in terms of the Laplace transform yields
\begin{align*}
h^{(j)}(t_{j} \mid T^{(j')}=t_{j'},  T^{(l)} >t_l\ \forall l \neq  j,j') &= 
-\frac{\mathcal{L}''(\Lambda_0(\boldsymbol{t})) }{\mathcal{L}'({\Lambda}_0(\boldsymbol{t}))} \frac{\partial {\Lambda}_0(\boldsymbol{t}) / \partial t_j \ \partial {\Lambda}_0(\boldsymbol{t}) / \partial t_{j'} }{\partial {\Lambda}_0(\boldsymbol{t}) / \partial t_{j'}} \\
&=  -\frac{\mathcal{L}''({\Lambda}_0(\boldsymbol{t}))}{\mathcal{L}'({\Lambda}_0(\boldsymbol{t}))} \partial {\Lambda}_0(\boldsymbol{t}) / \partial t_j.  
\end{align*}
Proceeding analogously with the denominator of the $\text{CRF}$ yields
\begin{align*}
h^{(j)}(t_{j} \mid   T^{(l)} >t_l\ \forall l \neq  j) &= -\frac{\partial S(\boldsymbol{t}) / \partial t_j}{S(\boldsymbol{t})} \\
&= -\frac{\mathcal{L}'(\Lambda_0(\boldsymbol{t}))}{\mathcal{L}(\Lambda_0(\boldsymbol{t}))} \partial \Lambda_0(\boldsymbol{t}) / \partial t_j
\end{align*}

This leads to $\text{CRF}^{(j,j')}(\boldsymbol{t}) = \frac{\mathcal{L}''(\Lambda_0(\boldsymbol{t})) \mathcal{L}(\Lambda_0(\boldsymbol{t}))}{\mathcal{L}'(\Lambda_0(\boldsymbol{t}))^2}$.
Note, that the baseline hazard rates cancel out and hence the choice of $j$ and $j'$ is irrelevant for the $\text{CRF}$ in the case of a time-invariant shared frailty model and the superscripts might be dropped accordingly. Furthermore, the $\text{CRF}$ might be seen as a function of the sum of the cumulative baseline hazards, i.e. $\text{CRF}(\boldsymbol{t})= \text{CRF}(\Lambda_0(\boldsymbol{t}))$.

The $\text{RFV}$ can be expressed through the conditional moment generating function and hence, through the conditional Laplace transform, i.e. the Laplace of survivors.
The Laplace of survivors 
$\mathcal{L}_{\boldsymbol{t}}(s) = \int_0^\infty\exp\{-zs\}g(z \mid \boldsymbol{T}>\boldsymbol{t})dz = \frac{\mathcal{L}(s + \Lambda_0(\boldsymbol{t}))}{\mathcal{L}( \Lambda_0(\boldsymbol{t}))}$
can be utilized to obtain $\operatorname{E}(Z \mid \boldsymbol{T}>\boldsymbol{t})$ through $- \mathcal{L}'_{\boldsymbol{t}}(0)$ and $\operatorname{E}(Z^2 \mid \boldsymbol{T}>\boldsymbol{t})$ through $ \mathcal{L}''_{\boldsymbol{t}}(0)$.
Thus,
\begin{align*}
\text{RFV}(\boldsymbol{t}) &= \frac{\operatorname{Var}(Z \mid \boldsymbol{T}>\boldsymbol{t})}{\operatorname{E}(Z \mid \boldsymbol{T}>\boldsymbol{t})^2} \\
&= \frac{\mathcal{L}''(\Lambda_0(\boldsymbol{t})) \mathcal{L}(\Lambda_0(\boldsymbol{t})) - \mathcal{L}'(\Lambda_0(\boldsymbol{t}))^2}{\mathcal{L}'(\Lambda_0(\boldsymbol{t}))^2} \\
&= \text{CRF}(\boldsymbol{t}) - 1.
\end{align*}
Analogously to the $\text{CRF}$, the $\text{RFV}$ might be seen as a function of the sum of the cumulative baseline hazards.

\section{Shape of the RFV of selected models} \label{ShapesApp}

In the $\mathcal{ZMP}$ model, the derivative of the $\text{RFV}$ with respect to the generic time $\Lambda$ is
\begin{align*} 
\text{RFV}'(\Lambda) = \text{RFV}_{\mathcal{P}}(\Lambda) \bigg( 1 + \frac{b}{r(\Lambda)} \bigg),
\end{align*} 
with $b = \frac{\phi -1}{1- \phi \exp\{-\eta\}}$ and $r(\Lambda) = \frac{\text{RFV}_{\mathcal{P}}(\Lambda)^2}{(1+\text{RFV}_{\mathcal{P}}(\Lambda))^2 - \text{RFV}_{\mathcal{P}}(\Lambda)} \exp\{\frac{1}{\text{RFV}_{\mathcal{P}}(\Lambda)}\}$. 
For $\phi\geq1$ (inflation) the $\text{RFV}$ is monotonically increasing, whereas stationary points might exist if $\phi <1$ (deflation).
A stationary point satisfies $-b = r(\Lambda)$.
In particular, $r(0) = \frac{\exp\{\eta\}}{{\eta}^2 + \eta + 1}$ and 
$\lim_{\Lambda \to \infty}r(\Lambda) = 1$ and $r(\Lambda)$ is strictly increasing for $\eta \leq 1$ but has a global minimum in the interval $(0,\infty)$ for $\eta>1$, that is at $\Lambda$ for which $\text{RFV}_{\mathcal{P}}(\Lambda) = 1$, and takes the value $\frac{\exp\{1\}}{3}$. 
Hence, there might be none, one or two stationary points in the $\text{RFV}$ for $\phi<1$.
Note that in case of $\phi<1, 0<-b\leq 1$.

The second order derivative of the $\text{RFV}$ evaluated at a stationary point is 
\begin{align*}
\frac{\text{RFV}_{\mathcal{P}}(\Lambda)-1}{(1+\text{RFV}_{\mathcal{P}}(\Lambda))^2-\text{RFV}_{\mathcal{P}}(\Lambda)}.
\end{align*}
Thus, if the stationary point is at $\Lambda$ for which $\text{RFV}_{\mathcal{P}}(\Lambda) = 1$, the $\text{RFV}$ has a single stationary point which is a saddle point and the \text{RFV} is monotonically increasing.
In any other case, a stationary point is either a local minimum or a local maximum.
If the $\text{RFV}$ has two stationary points, a local maximum is followed by a local minimum. 
For $\phi>0$, the $\text{RFV}$ is strictly increasing in its tail.
A strictly decreasing tail of the $\text{RFV}$ exists if and only if $\phi=0$, i.e. for the truncated $\mathcal{P}$ model.

Table \ref{table:RFVs} gives details of the RE models discussed in Section \ref{Shapes}.
For the shifted models that are listed in Table \ref{table:RFVs}, we assume that $p>0$.
The corresponding information for $p=0$ can be obtained from the row of the non-shifted counterpart.
For the $\mathcal{NB}+p$ model, $c(\Lambda)$ is defined to be $\nu(1-\pi) + p(\exp\{\Lambda\} - (1-\pi))$.
The constant $c^*$ is equal to $\ln\{\frac{(1-\pi)(\nu - p)}{p}\}$ for $\nu > p > 0$ and $-\infty \text{ for } \nu \leq p$.
For the $\mathcal{B}+p$ model, $c(\Lambda)$ is defined to be ${\pi \exp\{-\Lambda\}(p+n) + p(1-\pi)}$,
and the constant $c^*$ is equal to $-\ln\{\frac{p(1-\pi)}{\pi(p+n)}\}$.
For the $\mathcal{P}+p$ model, $c(\Lambda)$ is defined to be $\eta \exp\{-\Lambda\} + p$, the constant $c^*$ equals ${\ln\{\frac{\eta}{p}\}}$.

\begin{sidewaystable*}
\centering
\scalebox{0.885}{
\begin{tabular}[t]{ |p{2cm}|p{3.75cm}|p{3cm}|p{8cm}|p{3.25cm}| } 
\hline
{$Z \sim$} & $\mathcal{L}(s)$ & $\text{RFV}(\Lambda)$ & $\text{RFV}'(\Lambda)$ & Support  \\[1ex]
\hline
\hline
 $\mathcal{NB}(\pi,\nu)$  & $\frac{\pi}{1-(1-\pi)\exp\{-s\}}^\nu$ 
& ${\exp\{\Lambda\}} / {(1-\pi)\nu}$ &  $ {>0 \forall \Lambda}$ & $ {\{0,1,\dots\}}$ \\[1ex]
\hline
$\mathcal{NB}_{>0}(\pi,\nu)$ & $\frac{\pi}{\exp\{s\}-(1-\pi)}^\nu$  
& ${(1-\pi)\exp\{-\Lambda\}} / {\nu}$ & $ {< 0 \forall \Lambda}$ & $ {\{\nu,\nu+1,\dots\}}$  \\[1ex]
\hline
$\mathcal{B}(\pi,n)$  &$ {((1-\pi)+\pi \exp\{-s\})^n}$ 
& ${(1-\pi)\exp\{\Lambda\}} / {n\pi}$ &  $ {>0 \forall \Lambda}$ &$ { \{0,\dots,n\}}$ \\[1ex]
\hline
$\mathcal{P}(\eta)$  & $  {\exp\{\eta (\exp\{-s\}-1)\}}$
 & ${\exp\{\Lambda\}} / {\eta}$ &  $ {>0 \forall \Lambda}$ & $ {\{0,1,\dots\}}$  \\[1ex]
\hline
$\mathcal{NB}(\pi,\nu) + p$ & 
\begin{tabular}{@{}l@{}l@{}}
			$ {\mathcal{L}_{\mathcal{NB}}(s)}$\\
                   $ {\times \exp\{-s p\}}$\\
                 \end{tabular}
& \begin{tabular}{@{}l@{}l@{}}
                   ${\exp\{\Lambda\}\nu(1-\pi)}$\\
                   $ \times {c(\Lambda)^{-2}}$\\
                 \end{tabular}  &
\begin{tabular}{@{}l@{}l@{}}
                   $ {>0 \ \forall \Lambda: \Lambda < c^*}$,\\
                   $ {=0 \ \forall \Lambda: \Lambda = c^*}$,\\
			$ {<0 \text{ else}}$\\
                 \end{tabular}  & 
$ {\{p,p+1,\dots\}}$ \\[1ex]
\hline
$\mathcal{B}(\pi,n) + p$ & 
\begin{tabular}{@{}l@{}l@{}}
			$ {\mathcal{L}_{\mathcal{B}}(s)}$\\
                   $ {\times \exp\{-s p\}}$\\
                 \end{tabular}
& \begin{tabular}{@{}l@{}l@{}}
                   ${(1-\pi)\pi n \exp\{-\Lambda\}}$\\
                   $ \times {c(\Lambda)^{-2}}$\\
                 \end{tabular}&
\begin{tabular}{@{}l@{}l@{}}
                   $ {>0  \ \forall \Lambda: \Lambda < c^*,}$\\
                   $ {=0 \ \forall \Lambda: \Lambda = c^*,}$\\
			$ {<0 \text{ else}}$\\
                 \end{tabular}  & 
$ {\{p,p+1,\dots\,p+n\}}$\\[1ex]
\hline

$\mathcal{P}(\eta) + p$ & 
\begin{tabular}{@{}l@{}l@{}}
			$ {\mathcal{L}_{\mathcal{P}}(s)}$\\
                   $ {\times \exp\{-s p\}}$\\
                 \end{tabular}
& ${\eta \exp\{-\Lambda\}} / {c(\Lambda)^2}$  &
\begin{tabular}{@{}l@{}l@{}}
                   $ {>0  \ \forall \Lambda: \Lambda < c^*}$,\\
                   $ {=0 \ \forall \Lambda: \Lambda = c^*,}$\\
			$ {<0 \text{ else}}$\\
                 \end{tabular} & 
$ {\{p,p+1,\dots\}}$ \\[1ex]
\hline

 \multirow{4}{*}{$\mathcal{AF}(\alpha,\gamma)$} &
			 \multirow{4}{*}{ 
							{\begin{tabular}[t]{p{5cm}}See \\ \cite{Farrington.2012} \end{tabular}}
}& 
	\multirow{4}{*}{ $ {\gamma\exp\{\alpha \Lambda\}}$} &
                   {\begin{tabular}[t]{p{2cm}p{3cm}}For ${\alpha < 0: }$ & ${<0 \forall \Lambda}$ \end{tabular}} & $ {\{-\alpha \nu, -\alpha(1 + \nu), \dots\}}$ \\    [1ex]         
			\cdashline{4-5}
			& & &{\begin{tabular}[t]{p{2cm}p{3cm}}For ${\alpha = 0: }$ & ${=0 \forall \Lambda}$ \end{tabular}}  & $ {\mathbb{R}_{>0}}$\\[1ex]
                   \cdashline{4-5}
		     & & &{\begin{tabular}[t]{p{2cm}p{3cm}}For ${\alpha > \gamma: }$ & ${>0 \forall \Lambda}$ \end{tabular}}&  $ {\{0, \alpha, 2\alpha,\dots\}}$ \\[1ex]
                   \cdashline{4-5}
		    &   & &{\begin{tabular}[t]{p{4.5cm}p{3cm}}For ${0 <\alpha < \gamma, \frac{1}{\gamma-\alpha} \in \mathbb{N}: }$ & ${>0 \forall \Lambda}$ \end{tabular}} & $ {\{0, \alpha, 2\alpha,\dots,b\alpha\}}$ \\[1ex]
\hline

 \multirow{4}{*}{$\mathcal{ZMP}(\eta, \phi)$} &
			 \multirow{4}{*}{\begin{tabular}{@{}l@{}l@{}}
				$  {\frac{1-\phi \exp\{-\eta\}}{1-\exp\{-\eta\}}\mathcal{L}_{\mathcal{P}}(s)}$\\
				$  {+ \frac{\phi - 1}{\exp\{\eta\}-1}}$\\ 
               	 \end{tabular}} & 
	\multirow{4}{*}{ \begin{tabular}{@{}l@{}l@{}}
			   $ {\exp\{\frac{-1}{\text{RFV}_{\mathcal{P}}(\Lambda)}\}}$\\
			   $ {\times \frac{\phi - 1}{1-\phi \exp\{-\eta\}}}$\\
			   $ {\times (1+\text{RFV}_{\mathcal{P}}(\Lambda))}$\\
			   $ {+ \text{RFV}_{\mathcal{P}}(\Lambda)}$\\
	 		\end{tabular}} &
			\multirow{4}{*}{ \begin{tabular}{@{}l@{}l@{}} {\begin{tabular}[t]{p{2cm}p{7.2cm}}
			   For $\phi=0:$ &  {\text{RFV} can be strictly decreasing}\end{tabular}}\\
			    {or can have inner global maximum. }\\[1ex]
			  {\begin{tabular}[t]{p{2cm}p{7.2cm}} For $\phi>0:$& { \text{RFV} can be strictly increasing}\end{tabular}}\\
			    {or can have up to two stationary points.}\\[1ex]
	 		\end{tabular}}&   \\
 		& & &   &$ {\{1,2,\dots\}}$  \\[1ex]
                   \cdashline{4-5}
 & & &    &\\[1ex]
 & & & & $ {\{0, 1,\dots\}}$\\[1ex]
\hline

\end{tabular}
}
\caption{Overview of some characteristics of discrete frailty distributions}
\label{table:RFVs}
\end{sidewaystable*}

\section{Proof of proposition \ref{prop}}  \label{proof}

Let $\Lambda \in \{\sum_{j=1}^J \Lambda^{(j)}_0(t_j): \boldsymbol{t} \in \mathbb{R}^J_{\geq 0}\}$.
The proof will be based on letting $\Lambda$ approach infinity. 
Note, that we implicitly refer to $t_j \to \infty$ in that case. 
The following proof is based on the assumption that $\operatorname{E}(Z)=\mu_1$ and $\operatorname{E}(Z^2)=\mu_2$ exist. 
This, however, does not need to be the case.
Notes on the adaption of the proof where the first second moment do not exit (at $\boldsymbol{t}=0$) can be found right after the proof in this Section.

In order to prove proposition \ref{prop} we need Tannery's Theorem.
\paragraph{Tannery's Theorem}
Let $\lim_{\Lambda^* \to \infty} v_k(\Lambda^*) = w_k$ and $F(\Lambda^*) = \sum_{k=1}^{\infty} v_k(\Lambda^*)$ be a convergent series. Then, $\lim_{\Lambda^* \to \infty} F(\Lambda^*) = \sum_{k=1}^{\infty} w_k$, if $ \mid v_k(\Lambda^*) \mid  \leq M_k$, $M_k$ being independent of $\Lambda^*$, and $\sum_{k=1}^\infty M_k < \infty$.

Tannery's Theorem can be found in \cite{Hofbauer.2002}.
Later, $\Lambda$ will take the role of $\Lambda^*$. The function $F$ will either be a pmf, a weighted survival function or an expectation and so will either be a convergent series or a finite sum, depending on $\Omega$. 
The proof of proposition \ref{prop} follows.

\begin{proof} 
\cite{Farrington.2012} proved that the \text{RFV} approaches infinity if the frailty is a mixture distribution with positive probability on $Z=0$ (p.685).
This also includes the case where the distribution for $Z>0$ is discrete and (i) follows. We prove (ii).

The conditional pmf for some $z \in \Omega$,
\begin{align}
g(z,\Lambda_0(\boldsymbol{t})) &\equiv P(Z=z \mid \boldsymbol{T} \geq \boldsymbol{t}) \nonumber \\
&= \frac{P(\boldsymbol{T} \geq \boldsymbol{t}  \mid  Z=z) g(z)}{P(\boldsymbol{T} \geq \boldsymbol{t})} \nonumber \\
&= \frac{\exp\{-z \Lambda_0(\boldsymbol{t})\} g (z)}{\sum_{k} \exp\{- z_{(k)}\Lambda_0(\boldsymbol{t})\}g(z_{(k)})} \label{gztpre} \\
&= \frac{g (z)}{\sum_{k} \exp\{- [z_{(k)}- z] \Lambda_0(\boldsymbol{t})\}g(z_{(k)})},  \label{gzT}
\end{align}
where the sum iterates over (the subscripts of the elements of) $\Omega$.

The proof is based on the idea that $\lim_{\Lambda \to \infty} g(z_{(1)},\Lambda) =  1$ and $\lim_{\Lambda \to \infty} \mu_q(t) = z_{(1)}^q$, with $ \mu_q(\boldsymbol{t}) = \operatorname{E}(Z^{q} \mid \boldsymbol{T} \geq \boldsymbol{t}), q = 1,2$. This is technically derived by Tannery's Theorem, which states the conditions to exchange the order of the limits in an expression like $\lim_{\Lambda \to \infty} \sum_{k=1}^{\infty}$. 

We will make use of Tannery's theorem two times. 
Firstly, $\lim_{\Lambda \to \infty}  g(z_{(1)},\Lambda) = 1$ follows immediately from \eqref{gzT}  and $M_k = g(z_{(k)})$ for the sum in the denominator. 
Consequently, $\lim_{\Lambda \to \infty}  g(z_{(k)},\Lambda) = 0, \forall k > 1$.
Secondly, the theorem is applied to $\operatorname{E}(z^q \mid T\geq t), q=1,2$, which requires a little more work.

For this purpose, it will be shown that a $k_{\Lambda}^*$ exists for which $\frac{\partial g(z_{(k)},\Lambda)}{\partial \Lambda} = g'(z_{(k)},\Lambda) < 0, \forall k \geq k^*_{\Lambda},$ and once a derivative becomes negative for some $\Lambda^*$ it will always be for all $\Lambda > \Lambda^*$. 
Then, $M_k = z_{(k)}^q g(z_{(k)})$ for $k\geq k_0^*$ and $k_0^*$ being the first k for which $g'(z_{(k)},\Lambda) \mid _{\Lambda=0} < 0$. 
Note, that we are referring to the right-hand side derivative for the latter case as $\Lambda \in [0,\infty)$.

The derivative of the conditional pmf w.r.t. $\Lambda$ for some $z \in \Omega$
\begin{align}
g'(z,\Lambda) &= \frac{g(z,\Lambda)^2}{g(z)} \bigg ( \sum_{k} [z_{(k)}- z]  \exp\{- [z_{(k)}- z] \Lambda\}g(z_{(k)}) \bigg ). \label{firstderiv}
\end{align}
Note, that this also requires an interchange of infinite summation and the derivation operator in the denominator of \eqref{gztpre}.
This might be shown by demonstrating uniform convergence of the series $\sum_k -z_{(k)} \exp\{-z_{(k)} \Lambda\}g(z_{(k)})$ which can be established by utilizing $z_{(k)} \exp\{-z_{(k)} \Lambda\}g(z_{(k)}) \leq z_{(k)}g(z_{(k)}) \forall \Lambda$ and the existence of the first moment $\mu_1$.
For $z_{(1)}$ the derivative includes only non-negative and at least one positive contribution as $\operatorname{Var}(Z)>0$ by assumption. 
Hence, $g'(z_{(1)},\Lambda)>0\ \forall \Lambda$. Let $k^*_{\Lambda}$ denote the index of the first negative derivative. 
Such a $k^*_{\Lambda}$ has to exist as the first derivative is positive (for all $\Lambda$) and all probabilities sum up to one w.r.t. summation over $\Omega$ (given $\Lambda$), i.e. there is at least one $k$ for which the probability decreases with time increasing. We drop the index from $k^*_{\Lambda}$ but emphasize that it depends on (generic) time. By re-expressing the derivative so that $z_{(k^*)}$ is a point of reference, i.e. 
\begin{align*}
g'(z_{(\tilde{k})},\Lambda) = &\frac{g(z_{(\tilde{k})},\Lambda)^2}{g(z_{(\tilde{k})})}\times \\
&\bigg (\exp\{[z_{(\tilde{k})} - z_{(k^*)}] \Lambda \} \sum_{k} [z_{(k)} - z_{(\tilde{k})}]  \exp\{- [z_{(k)} -  z_{(k^*)}] \Lambda\} g(z_{(k)}) \bigg ),
\end{align*}
it can be seen that $g'(z_{(\tilde{k})},\Lambda)<0, \forall \tilde{k} > k^*$. This is because the sign determining contribution within the sum, $ z_{(k)} - z_{(\tilde{k})}$, is smaller than its counterpart from $g'(z_{(k^*)},\Lambda)$, $z_{(k)} - z_{(k^*)}$, for all $k$ and the weights within the sum remain the same w.r.t. the reference $k^*$.

From \eqref{firstderiv} it can be seen that $\frac{g'(z_{(k)}, \Lambda_{+})}{g(z_{(k)}, \Lambda_{+})^2} < \frac{g'(z_{(k)},\Lambda)}{g(z_{(k)},\Lambda)^2} $, for all $\Lambda_{+} > \Lambda$ and $k>1$ as the weight for each negative contribution in the sum of the derivative increases but decreases for each positive contribution. 
That means, that once a derivative becomes negative it will stay negative with time increasing, i.e. $k_0^*\geq k_{\Lambda}^*$ for all $\Lambda>0$. As a consequence, for all $k \geq k_0^*$ and $\Lambda>0$: $g(z_{(k)}) > g(z_{(k)},\Lambda)$. 
Hence, $\mu_q(\boldsymbol{t}) \leq \mu_q$ and $\mu_q(\boldsymbol{t})$ exists (as the monotonically increasing series $\operatorname{E}(Z^q \mid \boldsymbol{T}\geq \boldsymbol{t})$ is bounded between zero and $\mu_q$) for $q=1,2$. 
Now, $M_k = z_{(k)}^{q} g(z_{(k)}) \geq z_{(k)}^{q} g(z_{(k)}, \Lambda) $ for all $k\geq k_0^{*}$, which is independent of $\Lambda$. Let $M_k = z_{(k)}^q \geq  z_{(k)}^q g(z_{(k)},\Lambda) $ for all $k< k^*_0$. 
Then, $\sum_{k=1}^{\infty} M_k = \operatorname{E}(Z^q) + \sum_{k = 1}^{k^*_0 -1} (1-g(z_{(k)}))z_{(k)}^q$ which converges as the latter part is a finite sum and finally, the sufficient conditions for exchanging the limits in $\lim_{\Lambda \to \infty} \operatorname{E}(Z^q \mid  T \geq t)$ are derived.

Through the latter application of Tannery's Theorem $\lim_{\Lambda \to \infty} \operatorname{E}(Z^q \mid \boldsymbol{T} \geq \boldsymbol{t}) = z_{(1)}^q$ follows in combination with the first application of Tannery's Theorem to $g(z_{(1)} \mid \boldsymbol{T} \geq \boldsymbol{t})$ from above. As $z_{(1)}^q>0$ by assumption, 
\begin{align*}
\lim_{\Lambda \to \infty} \text{RFV}(\Lambda) &=\lim_{\Lambda \to \infty}  \frac{\operatorname{E}(Z^2  \mid  \boldsymbol{T} \geq t) - \operatorname{E}(Z  \mid  \boldsymbol{T} \geq t)^2}{\operatorname{E}(Z  \mid  \boldsymbol{T} \geq t)^2} \\
&= \frac{z_{(1)}^2 - z_{(1)}^2}{z_{(1)}^2} = 0
\end{align*}
and the proof is complete.
\end{proof}

\paragraph{Notes on the adaption of the proof if $\operatorname{E}(Z), \operatorname{V}(Z)$ do not exist:}
In the case of $\operatorname{E}(Z) = \infty$ and/or $\operatorname{V}(Z) = \infty$ the proof can be thought to step in at some moment after $\boldsymbol{t}=0$ where $\operatorname{E}(Z\mid\boldsymbol{T} \geq \boldsymbol{t}) \equiv \mu_1(\boldsymbol{t})$ and $\operatorname{E}(Z^2\mid \boldsymbol{T} \geq \boldsymbol{t}) \equiv \mu_2(\boldsymbol{t})$ do exist.
We will show below that there is a $\boldsymbol{t}^*$ where the conditional first and second moments exist.
The proof above can be thought to step in at the moment $\boldsymbol{t}^*$ and all $\boldsymbol{t}$ in the proof can be thought to be larger than $\boldsymbol{t}^*$. 
The threshold $k_0^*$ has to be redefined to $k_{\Lambda(\boldsymbol{t}^*)}^*$.
The unconditional moments $\mu_1$ and $\mu_2$ have to be replaced by $\mu_1(\boldsymbol{t}^*)$ and $\mu_2(\boldsymbol{t}^*)$. 
Furthermore, the time argument $\Lambda$ has to be redefined to $\Lambda(\boldsymbol{t}^*) + \Lambda$ and $\Lambda \in \{\sum_{j=1}^J \Lambda^{(j)}(t_j)-\Lambda^{(j)}(t^*_j): \boldsymbol{t}\geq \boldsymbol{t}^*\}$.
Note that if $g(0)>0$, $g(0 \mid\boldsymbol{T} \geq \boldsymbol{t})>0$ and the same rule holds for $g(0)=0$.
Hence, $z_{(1)}$ decides the limiting behaviour of the $\text{RFV}$ as in the case with existing moments.

If the moments of ${Z\mid \boldsymbol{T}\geq \boldsymbol{t}}$ exist, the Laplace transform of survivors $\mathcal{L}_{\boldsymbol{t}}(s)=\frac{\mathcal{L}(s+\Lambda(\boldsymbol{t}))}{\mathcal{L}(\Lambda(\boldsymbol{t}))}$ can be utilized for their derivation by taking the derivative w.r.t. $s$ and then setting $s=0$.
We consider the numerator to proof the existence of the moments from $\boldsymbol{t}^*$ latest onwards. 

The Laplace transform $\mathcal{L}(s+\Lambda(\boldsymbol{t}))$ converges on $s \in (-\Lambda(\boldsymbol{t}),\infty)$. 
However, we are only interested in the derivative at $s=0$ (and possibly on its value for $s>0$ to compute the survival function of survivors).
Hence, we restrict the domain to $s \in (-r,\infty)$ for some fix $r \in (0,1-\frac{2}{\exp\{1\}})$.
We choose $\boldsymbol{t}^*$ such that $\Lambda(\boldsymbol{t}^*)=1$.
With that we establish uniform convergence of the sequence of partial sums $\{S_n'(s\mid \boldsymbol{t})\}_{n=1}^\infty = \sum_{k=1}^n (- z_{(k)})^q \exp\{-z_{(k)}(s+\Lambda(\boldsymbol{t}))\}g(z_{(k)})$, 
i.e. of the sum of the first ($q=1$) and second order ($q=2$) derivatives of each summand in $\mathcal{L}(s+\Lambda(\boldsymbol{t}))$ w.r.t. $s$, and thereby show that the series' limit is also the derivative of $\mathcal{L}(s+\Lambda(\boldsymbol{t}))$ for each $s\in (-r,\infty)$ given each $\boldsymbol{t} \geq \boldsymbol{t}^*$.
Note that $z^q_{(k)} \exp\{-z_{(k)}(s+\Lambda(\boldsymbol{t}))\}g(z_{(k)}) \leq z^q_{(k)} \exp\{-z_{(k)}(1-r)\}g(z_{(k)})$ for all $k$ and $s \in (-r,\infty)$ given each $\boldsymbol{t} \geq \boldsymbol{t}^*$.
The sequence $\{S_n\}_{n=1}^{\infty} = \{\sum_{k=n}^\infty (-z_{(k)})^q \exp\{-z_{(k)} (1-r) \}g(z_{(k)})  \}_{n=1}^{\infty}$
converges to $0$ as $z_{(k)}^q < \exp\{z_{(k)}(1-r)\}$ for all $k$, and $q \in \{1,2\}$ ($r$ and $\Lambda(\boldsymbol{t}^*)$ were chosen such that those inequalities holds).
Hence, for each $\epsilon > 0$ there exists a  $\tilde{n}\in\mathbb{N}$ 
such that $\lvert S_n \rvert <\epsilon, \forall n>\tilde{n}$. 
This essentially proves uniform convergence of the series $S'(s \mid \boldsymbol{t})$ as
${ \lvert S'_n(s\mid \boldsymbol{t}) - \sum_{k=1}^\infty (-z_{(k)})^q \exp\{-z_{(k)}(s+\Lambda(\boldsymbol{t}))\}g(z_{(k)}) \rvert} \leq {\lvert S_{n+1} \rvert < \epsilon}$ for all $n>\tilde{n}$ and all $s \in (-r,\infty)$.
This holds for all $\boldsymbol{t}\geq \boldsymbol{t}^*$.
Hence, the first and second order derivative of $\mathcal{L}_{\boldsymbol{t}}(s)$ at $s=0$ and so $\mu_1(\boldsymbol{t}), \mu_2(\boldsymbol{t})$ as well as the $\text{RFV}(\boldsymbol{t})$ exist for all $\boldsymbol{t}\geq\boldsymbol{t}^*$. 
Note that $\boldsymbol{t}^*$ is not claimed to be the lowest threshold for which both conditional moments do exist.

\paragraph{Notes on the adaption of the proof to discrete piecewise constant time-varying frailty:}
As mentioned at the end of Section \ref{Shapes}, the proof may be adapted to certain time-varying (shared) frailty models.
Consider a piecewise constant time-varying (shared) frailty model.
The frailty $Z(t) = Z_q$, if $t \in [t_{q-1}, t_{q})$.
The frailty RVs $Z_1, \dots, Z_Q$ are discrete.
The ordered support of $Z_q$ is $\{z_{q,(1)}, z_{q,(2), \dots}\}$. 
The power of the support of $Z_q$ is $K_q$ which might be infinity.
The pmf of $Z_Q$ is denoted $g_Q$.
Let $\Lambda_q = \sum_{j=1}^J \int_{t_{q-1}}^{t_{q}} \lambda^{(j)}_0(t)dt$,
$\Lambda_Q(\boldsymbol{t}) = \sum_{j=1}^J \int_{t_{Q-1}}^{t_j} \lambda^{(j)}_0(u)du$ for $\boldsymbol{t}\geq t_{Q-1}$, and $\Lambda_Q \in \{\Lambda_Q(\boldsymbol{t}) \mid \boldsymbol{t}\geq t_{Q-1}\}$.
Further we denote $c(z) = \sum_{k_1 = 1}^{K_1} \dots \sum_{k_{Q-1}=1}^{K_{Q-1}}  \exp\{-\sum_{q=1}^{Q-1} z_{q,(k)}\Lambda_q\} P (Z_1=z_{1,(k_1)},\dots, Z_{Q-1}=z_{Q-1,(k_{Q-1})} \mid Z_Q = z)$.
Note that $c(z) = P(\boldsymbol{T}> {t_{Q-1} \mid  Z_Q = z})$ and hence lies between zero and one.
For $\boldsymbol{t} \geq t_{Q-1}$, $P(\boldsymbol{T} > \boldsymbol{t} \mid Z_Q = z)$ simplifies to $\exp\{- z \Lambda_Q(\boldsymbol{t})\} c(z)$.
Additionally, $P(\boldsymbol{T} \geq \boldsymbol{t}) = \sum_{k_1}^{K_1} \dots \sum_{k_{Q}}^{K_{Q}}  P(\boldsymbol{T} \geq \boldsymbol{t} \mid Z_1 = z_{1,(k_1)}, \dots, Z_Q=z_{Q,(k_Q)}) P(Z_1 = z_{1,(k_1)}, \dots,Z_{Q-1}=z_{Q-1,(k_{Q-1})} \mid  Z_Q=z_{Q,(k_Q)})g_Q(Z_Q=z_{Q,(k_Q)})$.
Which, with the help of the former definitions, simplifies to $ \sum_{k=1}^{K_Q}\exp\{- z_{Q,(k)} \Lambda_Q(t)\} g_Q(z_{Q,(k)}) c(z_{Q,(k)})$, for $\boldsymbol{t}\geq t_{Q-1}$.

Two adaptions have to be made in the proof in order to extend it to the given time-varying (shared) frailty scenario.
Firstly, the proof is thought to step in at $t_{Q-1}$, i.e. each element of the vector $\boldsymbol{t}$ is assumed to be larger than $t_{Q-1}$.
Secondly, the derivatives have to be taken w.r.t. $\Lambda_Q$ instead of $\Lambda$.
The ``$M_k$`s" are merely adapted in notation by being either $z_{Q,k}$ or $z_{Q,k}g_Q(z_{Q,k})$. 
To show this we reconsider $P(Z_Q=z \mid \boldsymbol{T} \geq \boldsymbol{t})$ which we adapt in notation to the time-varying scenario by setting it to $g_Q(z,\Lambda_Q(\boldsymbol{t}))$.

The probability of $Z_Q=z$ given survival up to $\boldsymbol{t}\geq t_Q$ can be expressed as
\begin{align}
g_Q(z,\Lambda_Q(\boldsymbol{t})) &= \frac{P(\boldsymbol{T} \geq \boldsymbol{t}  \mid  Z_Q=z) g_Q(z)}{P(\boldsymbol{T} \geq \boldsymbol{t})} \nonumber \\
&= \frac{\exp\{- z \Lambda_Q(\boldsymbol{t})\} c(z) g_Q(z)}{ \sum_{k=1}^{K_Q}\exp\{- z_{Q,(k)} \Lambda_Q(\boldsymbol{t})\} g_Q(z_{Q,(k)}) c(z_{Q,(k)})} \label{gztpreQ}.
\end{align}
Now, $c(z) g_Q(z)$ can be redefined to be $g(z)$ and \eqref{gztpreQ} is the time-varying counterpart of \eqref{gztpre}.
As $0\leq c(z) \leq 1$, the remainder of the proof is unaffected.
Hence, if $z_{Q,(1)} > 0$ the $\text{RFV}$ approaches zero, whereas the $\text{RFV}$ approaches infinity for $z_{Q,(1)}=0$.



\end{appendices}





\end{document}